\documentclass{article}
\usepackage{kr}

\usepackage{times}
\usepackage{soul}
\usepackage{url}
\usepackage[hidelinks]{hyperref}
\usepackage[utf8]{inputenc}
\usepackage[small]{caption}
\usepackage{graphicx}
\usepackage{booktabs}
\usepackage{algorithm}
\usepackage{algorithmic}
\usepackage{amsthm,bm,bbm} 
\usepackage{amsmath} 
\usepackage{amssymb}
\usepackage{mathtools} 
\usepackage{stmaryrd}
\usepackage[dvipsnames]{xcolor} 
 
\usepackage{todonotes} 
\usepackage{xspace}
\usepackage{paralist}
\usepackage[mathscr]{euscript}
\usepackage{enumitem}
\usepackage{subcaption}
\usepackage{tikz} 
\usetikzlibrary{automata, arrows.meta, positioning, shapes}
\usepackage{textcomp}

\usepackage{babel}[..]

\usepackage[appendix=append]{apxproof}

\newtheorem{vb}{Example}
\newtheorem{theorem}{Theorem}
\newtheorem{lemma}{Lemma}
\newtheorem{pr}{Proposition}
\newtheorem{cl}{Claim}

\newtheorem{definition}{Definition}

\newtheorem{corollary}{Corollary}

\newtheoremrep{theorem}{Theorem}
\newtheoremrep{pr}{Proposition}

\newcommand{\A}{\mathcal{G}}
 
\newcommand{\T}{\mathcal{T}}
 
\newcommand{\G}{\mathcal{G}}
 
\newcommand{\C}{\mathcal{C}}

\newcommand{\I}{\mathcal{G}}

\renewcommand{\S}{\mathcal{S}}

\newcommand{\alcio}{$\mathcal{ALCIO}$\xspace}

\newcommand{\alci}{$\mathcal{ALCI}$\xspace}
\newcommand{\alco}{$\mathcal{ALCO}$\xspace}
\newcommand{\alc}{$\mathcal{ALC}$\xspace}

\newcommand{\NC}{N_{C}} 
\newcommand{\NR}{N_{R}} \newcommand{\roles}{\overline{N}_{R}}
\newcommand{\NI}{N_{I}} \newcommand{\NS}{N_{S}}
 
\newcommand{\var}{N_{V}}

\newcommand{\exptime}{\textsc{ExpTime}\xspace}

\newcommand{\trp}{\mathit{tr}^+_{S,\C}}
\newcommand{\trm}{\mathit{tr}^-_{S,\C}}
\newcommand{\trpe}{\mathit{tr}^+_{\emptyset,\C}}
\newcommand{\trpt}{\mathit{tr}^+}
\newcommand{\trmt}{\mathit{tr}^-}
\newcommand{\tr}{\mathit{tr}}
\newcommand{\final}{\mathit{cln}}
\newcommand{\sub}{\mathit{sub}}
\newcommand{\pos}{\mathit{pos}}
\newcommand{\atoms}{\mathit{atoms}}
\newcommand{\wfgc}{\mathit{WF}_{\G,\C}}

\newcommand{\wicgeen}{\mathit{W}_{\G_1,\C}}
\newcommand{\rot}{\mathit{root}}
\newcommand{\semantics}{\ensuremath{\mathbb{S}}\xspace}
\newcommand{\automaton}{\mathbf{A}}

\newcommand{\ptrue}[3]{\lfloor #1 \rfloor^{#2}_{#3}}
\newcommand{\mtrue}[3]{\lceil #1 \rceil^{#2}_{#3}}
\newcommand{\pmtrue}[3]{[ #1 ]^{#2}_{#3}}

\newcommand{\guess}{\mathbf{G}}

\newcommand{\citet}[1]{\citeauthor{#1} (\citeyear{#1})}

\newcommand{\wrt}{w.r.t.\,}

\newcommand*\imge{\raisebox{-0.17\baselineskip}{\includegraphics[height=0.81\baselineskip]{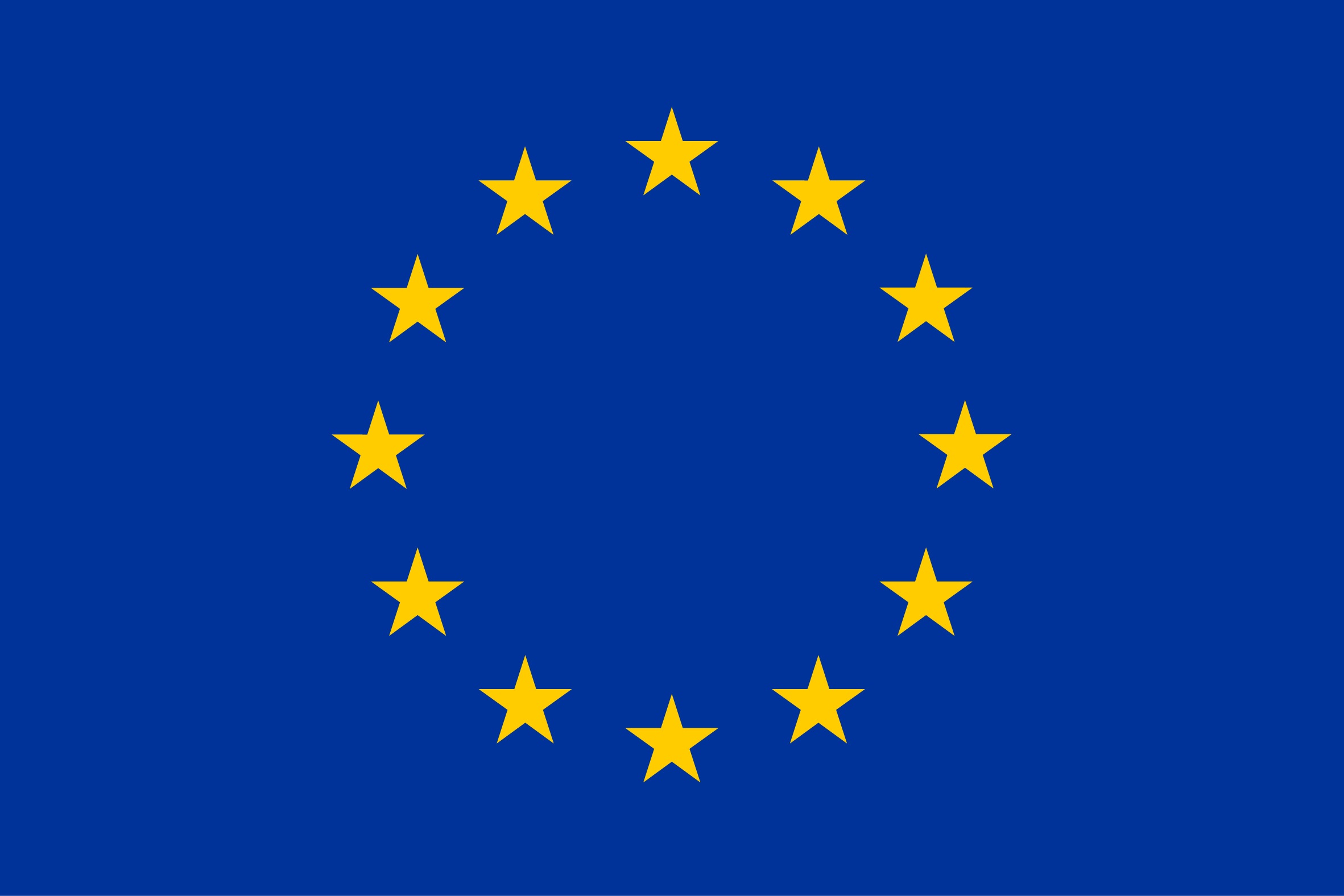}}}

\title{Static Analysis of Recursive SHACL}

\author{%
Anouk Oudshoorn\and
Magdalena Ortiz\and
Mantas \v{S}imkus \\
\affiliations
Institute of Logic and Computation, TU Wien\\
\emails
firstname.lastname@tuwien.ac.at
}

\begin{document}

\maketitle

\begin{abstract}
SHACL (Shapes Constraint Language)
expresses  constraints on RDF data by means of so-called \emph{shapes}. 
Its central service is \emph{validation}: verifying whether a data graph complies with a SHACL document. 
But so far, there are no \emph{static analysis services} to compare documents. 
In this paper, we study the following problem: 
decide whether all graphs that validate one SHACL document also validate another.  Unlike previous works that have considered the implication of shape expressions only, we consider documents comprising (recursive) shape definitions and targets.
We show that implication (a.k.a.\ containment) is undecidable under the supported and the stable model semantics, even for the fragment that uses the description logic $\mathcal{ALCIO}$ for shape expressions. Under the well-founded semantics, in surprising contrast, it is decidable in single exponential time. Our key technical contribution is a translation of SHACL under the well-founded semantics into the full hybrid $\mu$-calculus, revealing a novel link between well-founded models and a fixed point modal logic, and a worst-case optimal automata-based decision procedure.\footnote{This is the long version of this work accepted at KR2026.}
\end{abstract}

\section{Introduction}

\emph{Graph-structured data} is gaining widespread adoption because it
avoids the need for rigid, predefined database schemas, which are
difficult to design in settings where data is highly complex,
incomplete, or subject to frequent structural change. In such
environments, flexible graph data models like Knowledge Graphs (KGs) and
Property Graphs are being increasingly adopted.  To
address the challenges arising in this graph-centric landscape, the
W3C introduced SHACL (Shape Constraint Language) as a standard for
expressing structural and semantic constraints over RDF
graphs~\cite{KK17}. The current SHACL recommendation defines semantics
only for \emph{non-recursive} SHACL constraints. Cyclic 
definitions (i.e., \emph{recursion}) are envisioned in the W3C
recommendation, but no semantics is defined, and validator
implementations are free to handle recursion in an \emph{ad hoc}
manner.  This problem has spurred research into suitable
semantics for SHACL documents with recursive constraints, including the
\emph{supported model semantics}~\cite{DBLP:conf/semweb/CormanRS18}, the
\emph{stable model semantics}~\cite{andresel2020stable}, and the
\emph{well-founded semantics}~\cite{DBLP:conf/kr/OkulmusS24}. As their
names suggest, these SHACL semantics are related to prominent
semantics in Logic Programming. See \citet{ahmetaj2026commonfoundationsrecursiveshape} for a more extensive discussion.

SHACL provides the means to define \emph{shape expressions} that
specify conditions that nodes in a graph are expected to
satisfy. These definitions are paired with a set of \emph{targets} in
a SHACL document. The central service is \emph{validation}, which
consists of verifying whether a graph complies with a SHACL
document. The computational complexity of this task has been studied
for recursive SHACL under various semantics, discovering that in many
settings it is tractable, yet intractability does arise in some
surprising settings, e.g.,\,for \emph{stratified} constraints under the
supported-model semantics~\cite{DBLP:conf/semweb/CormanRS18}. A
natural next step in the study of SHACL is understanding the
computational properties of various \emph{static analysis services},
such as \emph{satisfiability} and \emph{implication
  (a.k.a.\,containment)}. The difference between validation and the main challenge here is that these tasks are \emph{data-independent}.
The first task is to decide whether a given
SHACL document is consistent, i.e., whether there exists at least one data graph that validates it. The second task is to decide whether all graphs that
validate one document must necessarily validate another
document. Checking implication in both directions allows to
decide if two SHACL documents are equivalent. Such services are a
stepping stone towards automated transformations and  optimisation
of SHACL~documents.

Several insights into the
working of SHACL have been achieved by examining its tight connection to
Description Logics
(DLs)~\cite{PARETI2022100721,DBLP:conf/lpnmr/BogaertsJB22,DBLP:conf/wollic/Ortiz23,DBLP:conf/aaai/0001S24}. Specifically,
since unrestricted SHACL shape expressions can be seen as concepts in
very expressive DLs, like ones that support \emph{role-value maps},
undecidability of satisfiability and implication testing in full expressive SHACL can be
inferred~\cite{PARETI2022100721}. Identifying restricted settings in which some static
analysis problems become decidable and to infer upper bounds on
their complexity can be done by drawing from the DL literature. For instance, \citet{DBLP:conf/semweb/LeinbergerSRLS20} provide some positive
results on reasoning about containment of shape
expressions instead of fully-fledged SHACL documents. Furthermore,
\citet{PARETI2022100721} provide some
positive results for satisfiability of recursive SHACL constraints
under the supported-model semantics, while  \citet{dl2025} extends these results further and shows that, for the supported model semantics, full document satisfiability is reducible to satisfiability of a single constraint.
However, our understanding of satisfiability and implication
in recursive SHACL is still very limited. In fact, to the best of our
knowledge, no attempts have been made to encode this problem into decidable logic formalisms, or solve it by other means.
 
Thus, this paper provides the first dedicated study of the computational
complexity of implication of SHACL documents in the presence of recursive
constraints. 
That is, we study the problem of deciding, given two
SHACL documents $\mathcal{S}_1$ and $\mathcal{S}_2$ containing shape
definitions and targets for validation, whether every graph that
validates $\mathcal{S}_1$ also validates $\mathcal{S}_2$. Our first
major result is that, if one considers the supported or the stable
model semantics, this problem is undecidable even when the language
for shape expressions is restricted to the SHACL fragment that
corresponds to the DL $\mathcal{ALCIO}$.  This is somewhat surprising,
but becomes even more intriguing in the light of our second main
result: under the well-founded semantics, this problem is not only
decidable, but is feasible in single exponential time and thus not
harder than deciding satisfiability of a single shape expression.

The main stepping stone to obtaining our positive results is a
translation of SHACL documents under the well-founded semantics into
formulas of the full hybrid $\mu$-calculus.  This translation
establishes a novel connection between logic with least and greatest
fixed points on the one hand, and the well-founded semantics on the
other.  Interestingly, capturing the well-founded semantics requires
limited alternation of fixed point operators.  This connection is, to
our knowledge, novel and of independent interest. We note that some initial works on the well-founded semantics for logic programs have already established a connection for logics with fixed points~\cite{DBLP:journals/jacm/GelderRS91,DBLP:journals/jcss/Gelder93}. Our setting here is  different: their monotonic operator is composed  of two anti-monotonic fixed point operators. That is, the first operator produces an overestimate of the set of negative conclusions, whereas the second is an underestimate. This alternating, estimating behaviour is absent in our work, in the sense that we translate (directly) into (fragments of) the full hybrid $\mu$-calculus. We rely on the insights of our more constructive translation to derive the tight complexity bounds.

That is, our translation of a SHACL document into a $\mu$-calculus formula
unfolds the rules defining shapes into a single formula, potentially
leading to an exponential blow-up. We therefore also provide an
automata-based algorithm for evaluating (the $\mu$-calculus
translation of) a SHACL document that runs in time bounded by a single
exponential function of the size of the input document.  This allows us to
establish optimal complexity bounds for SHACL static validation
services in a range of fragments.

\section{Preliminaries}

\paragraph{Data Graphs.}
Let $\NC,\NR$ and $\NI$ denote countably infinite, mutually disjoint sets of \emph{concept names}, \emph{role names}, and \emph{individuals} respectively, such that $\top \in \NC$. Let $\roles := \{p,p^- \mid p \in \NR\}$ denote \emph{roles}. 
For every $p\in \NR$, let $(p^-)^- = p$.
An \textit{atom} (or, \emph{assertion}) is an expression of the form $A(a)$ or $p(a,a')$, for $A\in \NC$, $p\in \NR$ and $\{a,a'\} \subseteq \NI$. A \emph{data graph} $\A$ is a finite set of atoms. Let $r^\A := \{(a,a') \in \NI \mid r(a,a') \in \A\}$ and $A^\A := \{a \in \NI \mid A(a) \in \A\}$. Let $\Delta^{\G}$ denote the set of all individuals that appear in a data graph $\A$.

\paragraph{Shape Constraint Language (SHACL).}
We introduce SHACL following the line first set out by   \citeauthor{DBLP:conf/semweb/CormanRS18} (\citeyear{DBLP:conf/semweb/CormanRS18}).
We start by defining \textit{shape expressions}. 
We do not define them for full SHACL, but only the fragments considered in this paper. To conveniently talk about these fragments, we borrow notation from Description Logics.
Let $\NS$ denote a countably infinite set of \textit{shape names} that is disjoint from $N_C \cup N_R \cup N_I$.
  Specifically, shape expressions $\varphi$ are build using the following grammar:
\begin{align*}
    \varphi::=  s \mid a\mid A \mid &\;\lnot s \mid
 \varphi \land \varphi \mid \varphi \lor \varphi \mid \forall r.\varphi \mid \exists r.\varphi, 
  \end{align*}
  where $s\in \NS$, $a \in N_I$, $A \in \NC$, 
  and $r \in \roles$. We refer to this fragment as \alcio SHACL, using the DL naming convention. If in the above fragment, $r$ is restricted to $\NR$, that is, we can only access roles in one direction, we call this \alco SHACL. If, on the other hand, the fragment does not use any $a \in \NI$, we are in \alci SHACL. If both restrictions apply, this is referred to as \alc SHACL.
  
  A \textit{shape constraint} is an expression of the form $s \gets \varphi$, for $s \in \NS$ and $\varphi$ a shape expression. We call  $s$ the \textit{head} of the constraint. Given a set of shape constraints $\C$, we assume each shape name $s$ only appears as the head of one constraint; this does not influence expressibility as `$\lor$' may be used in $\varphi$.
  A \textit{shape atom} has the form $s(a)$ with $s \in \NS$ and $a \in \NI$, and we call $\lnot s(a)$
   \textit{negated shape atom}; a \textit{shape literal} is a possibly negated shape atom.

  Let
  \begin{align*}
      \sub(\psi) &:=\{ \psi\} \\
      \sub(\lnot s) &:= \{\lnot s,s\}\\
      \sub(\varphi * \varphi') &:= \{\varphi * \varphi'\} \cup \sub(\varphi) \cup \sub(\varphi') \\
      \sub(\circ r.\varphi) &:= \{\circ r.\varphi\} \cup \sub(\varphi)
  \end{align*}
  for $\psi \in \NC \cup \NI \cup \NS$, $* \in \{\land,\lor\}$ and $\circ \in \{\exists, \forall\}$.
  For any constraint set $\C$, let $\sub(\C) := \{\sub(\varphi) \mid s \gets \varphi \in \C\}$.

A \emph{target} is a pair $(\ell,s)$, where $\ell\in \NI\cup \NC\cup\roles $. A SHACL document is a pair $(\C,\T)$, where $\C$ is a set of shape constraints and $\T$ is a set of targets. With a slight abuse of notation, we sometimes use a shape atom $s(\ell)$ to specify a target  $(\ell,s)$, and write $(\C,s(\ell))$ instead of $(\C,\{(\ell,s)\})$. 

The semantics of SHACL is based on the notion of a \emph{shape assignment}, defined as a set $S$ of shape shape literals that are not contradictory, that is, there are no $\{s(a),\lnot s(a)\} \subseteq S$. 
We view such shape assignments as three-valued: 
intuitively, $s(a) \in S$ means that $s$ is validated at $a$, $\lnot s(a) \in S$ means that $s$ is not validated, and $\{s(a), \lnot s(a)\} \cap S = \emptyset$ means that the validation of $s$ at $a$ is \textit{undefined} in $S$. We say $S$ is \emph{total}, if for every shape name $s$ and individual $a$ that appear in $S$, we have $s(a)\in S$ or $\lnot s(a) \in S$.

We say $\G$  validates a set of targets $\T$ under a shape assignment $S$ if the following are satisfied:
\begin{enumerate}
\item  $s(a) \in S$, for every $(a,s)\in \T$,
\item  $s(a) \in S$, for every $(A,s)\in \T$ and $a \in A^{\A}
  $, and 
\item  $s(a) \in S$, for every $(r,s)\in \T$ and $(a,a') \in r^{\A}
$.  
\end{enumerate}
To define validation for a SHACL document $(\C,\T)$, we still need 
to give semantics  to shape expressions under a given shape assignment. For recursive SHACL, three ways to do this have been advocated in the literature: the well-founded \cite{DBLP:conf/kr/OkulmusS24}, supported \cite{DBLP:conf/semweb/CormanRS18}, and stable semantics \cite{andresel2020stable}. 
The well-founded semantics yields a unique assignment for each $\G$ and $\C$. In contrast, the supported and the stable semantics both yield a set that may contain several \emph{total} assignments (or none at all); in such cases, one adopts \emph{brave} validation. 

\begin{definition}
A \emph{semantics} $\semantics$ for SHACL is a function that maps each pair of a graph and a set of constraints to a set of assigments. We say that $\G$ satisfies $(\C,\T)$ under \semantics if $\G$ validates $\T$ under some assignment in $\semantics(\G,\C)$. We say that a node $a$ in $\G$ satisfies a shape $s$ w.r.t.\,$\C$ under \semantics if  $\G$ satisfies $(\C,s(a))$.
If some node of $\G$ satisfies $s$ w.r.t.\,$\C$, we say that \emph{$\G$ 
satisfies $s$ w.r.t.\,$\C$}, or for simplicity, just \emph{$\G$ 
satisfies $(\C,s)$}. 
\end{definition}

We define next the well-founded semantics, which is the focus of this paper, and briefly recall the supported semantics. For the full definition of the stable semantics, we refer the reader to \cite{andresel2020stable}. 

Recall that, based on the allowed syntax for SHACL documents, concrete individuals may explicitly appear inside shape expression or targets. To make presentation simpler, when we talk about checking if a graph $\G$ satisfies a document $(\C,\T)$, we assume that $\G$ is \emph{compatible} with  $(\C,\T)$. By this we mean that all individuals that appear in $(\C,\T)$ also appear in $\G$.  Validation for incompatible  $\G$ and $(\C,\T)$ is considered undefined. When we quantify over possible graphs $\G$ for validation or non-validation w.r.t.\,a document $(\C,\T)$, the quantification is always limited to graphs that are compatible with  $(\C,\T)$.

\paragraph{Well-founded Semantics.}

\begin{figure*}[t]\centering
  \begin{align*}
    \ptrue{s}{\G}{S} &:= \{a\mid s(a)\in S\}  &    \mtrue{s}{\G}{S} &:= \{a\in \Delta^{\G}\mid \neg s(a)\not\in S\} \\[.5ex]    \ptrue{\neg s}{\G}{S} &:= \{a\mid \neg s(a)\in S\} &
    \mtrue{\neg s}{\G}{S} &:= \{a\in \Delta^{\G}\mid s(a)\not \in S\} \\[.5ex]
           \pmtrue{a}{\G}{S} &:= \{ a^{\G} \} &
    \pmtrue{A}{\G}{S} &:= A^{\G}  & ~~~~~ [\cdot]\in \{\ptrue{\cdot}{}{},\mtrue{\cdot}{}{}\big\}   \\
    \pmtrue{\varphi \lor \varphi'}{\G}{S} &:= \pmtrue{\varphi}{\G}{S}\cup \pmtrue{\varphi'}{\G}{S} & \pmtrue{\varphi \land \varphi'}{\G}{S} &:=\pmtrue{\varphi}{\G}{S}\cap \pmtrue{\varphi'}{\G}{S}   &  [\cdot]\in \{\ptrue{\cdot}{}{},\mtrue{\cdot}{}{}\big\}  
  \end{align*}
  \vspace*{-16pt}
  \begin{align*}
     \;\;\;\;\;\:\:\:\pmtrue{\forall r.s}{\G}{S} &:=\big\{ a\in \Delta^{\G}\mid \forall a'\in \Delta^{\G}: (a,a')\in r^{\G} \mbox{ implies } a'\in \pmtrue{s}{\G}{S}   \}\qquad \qquad    &  [\cdot]\in \{\ptrue{\cdot}{}{},\mtrue{\cdot}{}{}\big\}  \\[.5ex]
    \pmtrue{\exists r.s}{\G}{S} &:=\big\{ a\in \Delta^{\G}\mid \text{ exists } a' \text{ s.t. } (a,a')\in r^{\G} \mbox{ and }a'\in \pmtrue{s}{\G}{S}   \}   &  [\cdot]\in \{\ptrue{\cdot}{}{},\mtrue{\cdot}{}{}\big\} 
  \end{align*}
  
  \caption{Evaluating shape expressions: upper and lower bounds.}
  \label{fig:expr-eval}
\end{figure*}

For a given data graph $\A$ and shape assignment $S$,  we define two functions 
$\ptrue{\cdot}{\A}{S}$ and $\mtrue{\cdot}{\A}{S}$  that map shape expressions to sets of nodes as described in Figure \ref{fig:expr-eval}.  
Let $\C(s) := \varphi$, where $s \gets \varphi \in \C$ is the unique constraint in $\C$ with head $s$.  
Intuitively, assuming that the literals in $S$ are true, $\ptrue{\varphi}{\A}{S}$ is the set of nodes where $\varphi$ is certainly validated, 
whereas $\mtrue{\varphi}{\A}{S}$ is the set of nodes for which validation is possible, as it is not prevented by $S$. 
We can use $\ptrue{\cdot}{\A}{S}$  to infer positive shape literals in the well-founded model: if $a \in \ptrue{\varphi}{\A}{S}$, we can infer $s(a)$. In contrast, from $\mtrue{\cdot}{\A}{S}$ we can infer negated shape literals: if $a \not\in \mtrue{\varphi}{\A}{S}$, then we can infer $\lnot s(a)$. 
 These inferred literals are added to the well-founded model incrementally, using the following one-step operators.     
For the positive inferences, the operator is
$$
T_{\A,\C}(S) := \{s(a) \mid s \gets \varphi \in \C, a \in \ptrue{\varphi}{\A}{S}\}.
$$

For inferring negative atoms, we rely on 
\emph{unfounded sets}.
For a set $U$ of shape atoms, we let $\lnot.U := \{\lnot s(a) \mid s(a) \in U\}$.
Given a shape assignment $S$, a data graph $\A$ and a set of constraints $\C$, we call a set of shape atoms $U$ \textit{unfounded} w.r.t. $S$, $\A$ and $\C$ if $a \not \in \mtrue{\C(s)}{\A}{S \cup \lnot.U}$ for all $s(a) \in U$. As being unfounded is preserved under union, there exists a unique \textit{greatest unfounded set} w.r.t.\ $S$, $\A$ and $\C$ that is not subsumed by a strictly larger unfounded set w.r.t.\ $S$, $\A$ and $\C$. Let $U_{\A,\C}$ be a function that maps each shape assignment $S$ to the greatest unfounded set w.r.t.\ $S$, $\A$ and $\C$.

Both positive and negative consequences are combined in the operator $W_{\A,\C}$. 
Like $T_{\A,\C}$ and $U_{\A,\C}$, it is a function that maps shape assignments to shape assignments.
\begin{equation}\label{eq:wfs-def}
W_{\A,\C}(S) := T_{\A,\C}(S) \cup \lnot . U_{\A,\C}(S).
\end{equation}
We say a function $f$ is \textit{monotone} if $x \subseteq y$ implies $f(x) \subseteq f(y)$. The above \textit{immediate consequence operator} $W_{\A,\C}$ is a monotone function, hence it has a unique least fixed point. We denote this fixed point $\wfgc$ and call it the \emph{well-founded model} of $\G$ and $\C$.
 
\begin{vb}\label{vb:mantasexample1}
    Consider the constraint set
    $\C = \{s \gets A \lor \exists p.\lnot t, t \gets \lnot s \lor \exists r. t\}$. The SHACL document $(\C,\{s(0)\})$ is satisfiable in, for instance, the following structure $\G_1$.
\newcommand*{\nodes}{5}

    \begin{center}
	\begin{tikzpicture}
        		\foreach \i in {0,...,\nodes} {
			\draw[-Stealth] (\i,0) to node[above] {$p$} (\i+1,0);
            \draw[-Stealth] (\i+1,0) to [loop above, edge node={node [above] {$r$}}] (\i+1,0);
            \draw (0,0) node at (\i,-0.25) {$\i$};
		}
		\draw (0,0) node at (\nodes+1.2,0.1) {$A$};
        \draw (0,0) node at (\nodes+1,-0.25) {$6$};
    	\end{tikzpicture}
        \end{center}
    To see this, observe the following step-by-step computation of the least fixed point $\mathit{WF}_{\G_1,\C}$.
    \begin{align*}
        \wicgeen(\emptyset) &= \{s(6)\} \cup \emptyset\\
        \wicgeen^2(\emptyset) &= \{s(6)\} \cup \{\lnot t(6)\}\\
        \wicgeen^3(\emptyset) &= \{s(6),s(5)\} \cup \{\lnot t(6)\}\\
        \wicgeen^4(\emptyset) &= \{s(6),s(5)\} \cup \{\lnot t(6),\lnot t(5)\}\\
        &\qquad\qquad\cdots
    \end{align*}
    Note that if all $p$-paths from 0 to an $A$ are infinitely long (and there exists one), this least fixed point will not terminate in a finite number of steps. 
\end{vb}

\paragraph{Supported Model Semantics.}
A supported model for 
constraint set $\C$ and a graph $\I$ is a total shape assignments $S$ that
 satisfies 
$s^{\I,S} = \ptrue{\varphi}{\I}{S}$ for every
$s \gets \varphi \in \C$.  We say $\I$ validates a SHACL document $(\C,\T)$ under the \emph{Supported Model Semantics}, if there is some supported model $S$ of $\C$ and a graph $\I$ that
satisfies $\T$. Note that above we used $\ptrue{\cdot}{\G}{S}$. However, both $\ptrue{\cdot}{\G}{S}$ and $\mtrue{\cdot}{\G}{S}$ can be used interchangeably here, since $\ptrue{\varphi}{\I}{S}=\mtrue{\varphi}{\I}{S}$, for any
$\varphi$, $\I$, and total $S$.

\begin{vb} 
Consider the same constraint set as in Example \ref{vb:mantasexample1}, but now for the following structure $\G_2$.
\begin{center}
\begin{tikzpicture}
    \draw[-Stealth] (0,0) to [out=30,in=150] node[above] {$p$} (1,0);
    \draw[-Stealth] (1,0) to [out=210,in=330] node[below] {$p$} (0,0);
    \draw (0,0) node at (0,-0.25) {$0$};
    \draw (0,0) node at (1,-0.25) {$1$};
\end{tikzpicture}
\end{center}
As $\mathit{WF}_{\G_2,\C}(\emptyset) = \emptyset$, we find that $\G_2$ and $\C$ only have empty well-founded models, whereas there are two supported models $S_1$ and $S_2$: 
\begin{align*}
    S_1 &= \{s(0),s(1),\lnot t(0),\lnot t(1)\}\\
    S_2 &= \{\lnot s(0), \lnot s(1), t(0), t(1)\}.
\end{align*}
\end{vb} 

\paragraph{Stratified Constraints.}

Derived from the well-known class of stratified programs \cite{10.5555/61352.61354}, stratified SHACL can be defined in the following way.

\begin{definition}
  We say a shape name $s$ is \emph{defined} in a set $\C$ of constraints if $s\gets \varphi\in \C$ for some $\varphi$.
  A set $\C$ of
  constraints is \emph{stratified} if it can be partitioned into sets
  $\C_0,\ldots,\C_k$ such that, for all $0 \leq i \leq k$, the
  following hold:

  \begin{enumerate}
  \item If $i< k$ and $s' \in \sub(\varphi)$ for some
    $s\gets \varphi \in \C_i$, then $s'$ is defined in
    $\C_{0}\cup\ldots \cup \C_i$.
  \item If $\lnot s' \in \sub(\varphi)$ for some
    $s\gets \varphi \in \C_i$, then $s'$ is defined in
    $\C_0\cup\ldots \cup \C_{i-1}$.
  \end{enumerate}
  A set of constraints is \emph{stratified} if it admits a
  stratification.
\end{definition}

For stratified constraints, $\wfgc$  is total, it is a supported model, and it is the unique stable model of $\G$ and $\C$; it is sometimes called the \emph{least} or \emph{perfect} model of $\G$ and $\C$. 

\paragraph{Normal Form.}
To simplify presentation, without loss of generality, we will mostly focus on SHACL constraints in which there is no nesting of constructors. Specifically, we focus on constraints that have one of the following forms:
\begin{center}
\begin{tabular}{l l l l}
    $s \gets A$ & $s \gets a$ & $s \gets \lnot s'$   & $s \gets \exists r.s'$ \\[0.5ex]
    $s \gets \forall r.s'$ & $s \gets s' \land s''$ & $s \gets s' \lor s''$
\end{tabular}
\end{center}
where $A\in \NC$, $a\in \NI$, $r\in \roles $, and $\{s,s',s''\} \subseteq \NS$. For all semantics we consider, transforming an arbitrary SHACL document into one where all constraints are in normal form while preserving validation is straightforward. Intuitively, this is because whenever a complex subformula $\psi$ occurs in a shape constraint $s\gets \varphi$, we can simply replace $\psi$ in $\varphi$ by a fresh shape name $s'$ and add  $s'\gets \psi$ to the constraint collection. See, e.g.\ the constructions by \citet{andresel2020stable} and \citet{DBLP:conf/dlog/Oudshoorn0S24}. 

\section{Static Analysis of SHACL Specifications}

In this paper we study the following decision problems  for the different semantics $\mathbb{S}$ described above.

\begin{description}
    \item[(Finite) document satisfiability:] Given
    a SHACL document $(\C,\T)$, decide whether there exists a (finite) graph  that satisfies $(\C,\T)$ under \semantics. 
       \item[Finite model property:] Is it the case that, under \semantics, every satisfiable SHACL document is finitely satisfiable?  
        \item[(Finite) document implication:] Given two SHACL documents $(\C_1,\T_1)$ and $(\C_2,\T_2)$, decide whether every (finite) graph that satisfies $(\C_1,\T_1)$ under \semantics also satisfies $(\C_2,\T_2)$.
    \end{description}
    We also consider the special cases where we target validation of one specific shape name:  
    \begin{description}
    \item[(Finite) shape satisfiability \wrt constraints:]  
    Given a set of  constraints $\C$ and a shape name $s$, decide whether there exists a (finite) graph that 
    satisfies $(\C,s)$ under \semantics.
        \item[(Finite) shape implication \wrt constraints:] Given shape constraints $\C_1$ and $\C_2$ and
     shape names $s_1$ and $s_2$, decide whether every node that  satisfies $(\C_1,s_1)$ under \semantics in a graph $\G$ also satisfies $(\C_2,s_2)$.
\end{description}

To our knowledge, document and shape satisfiability are the only problems that have been studied before, except for one work that studied implication (sometimes called containment) for SHACL, but in a more restricted setting; 
\citeauthor{DBLP:conf/semweb/LeinbergerSRLS20}
(\citeyear{DBLP:conf/semweb/LeinbergerSRLS20}) studied satisfiability and implication problem for shape expressions alone, with no constraints. 
Their problem corresponds to deciding whether 
$(\{s \gets \varphi_1\},s)$ is satisfiable, and whether
$(\{s_1 \gets \varphi_1\},s_1)$ implies $(\{s_2 \gets \varphi_1\},s_2)$ for given $\varphi_1,\varphi_2$.

For the supported model semantics, document satisfiability has been considered by \citeauthor{PARETI2022100721} (\citeyear{PARETI2022100721}) and 
\citeauthor{dl2025} (\citeyear{dl2025}).
In this semantics, these problems naturally reduce to satisfiability and logical implication in fragments of classical first-order logic (FO), or equivalently, to concept satisfiability and subsumption w.r.t.\ TBoxes in Description Logics (DLs). \citeauthor{PARETI2022100721} obtained many undecidability and some decidability results by analysing the target FO fragments, which were then significantly expanded and tightened in terms of complexity by \citeauthor{dl2025} building on DLs.  
Some of the results mentioned above consider both finite and unrestricted models; others focus on finite ones. In some cases, the finite model property is inferred by establishing a connection to logics that enjoy it.  We refer to \citeauthor{dl2025} (\citeyear{dl2025}) for a more extensive discussion.

We are not aware of any works considering satisfiability and containment under the stable or well-founded semantics, but an \exptime upper bound for shape satisfiability \wrt constraints under the stable semantics for the $\mathcal{ALCI}$ fragment can be inferred from the work of
\citet{DBLP:conf/aaai/0001S24} on DL terminologies under the stable (a.k.a.\ \emph{equilibrium}) semantics.  

\definecolor{ashgrey}{rgb}{0.82, 0.84, 0.81}
\newcommand*{\xMin}{0}%
\newcommand*{\xMax}{2}
    \begin{figure}
        \centering
        \begin{tikzpicture}
            \foreach \i in {\xMin,...,\xMax} {
                \foreach \j in {\xMin,...,\xMax} {
                    \draw[-Stealth] (\i,\j) -- node[below] {$r$} (\i+1,\j);
                    \draw[-Stealth] (\i,\j) -- node[left] {$u$} (\i,\j+1);
                }
                \draw[-Stealth] (\i,\xMax+1) -- node[below] {$r$} (\i+1,\xMax+1);
                \draw[-] (\i,\xMax+1) -- (\i,\xMax+1.4);
                \draw[-Stealth] (\xMax+1,\i) -- node[left] {$u$} (\xMax+1,\i+1);
                \draw[-] (\xMax+1,\i) -- (\xMax+1.4,\i);
                \draw[-Stealth,densely dashed,ashgrey] (\i+1,0) to [out= 225,in=300] (0,0); 
                \draw[-Stealth,densely dashed,ashgrey] (0,\i+1) to [out= 225,in=150] (0,0);
                \draw[-Stealth,densely dashed,ashgrey] (\i+1,1) to [out= 260,in=300] (0,0); 
            }
            \foreach \i in {1,...,\xMax} {
                \draw[-Stealth,densely dashed,ashgrey] (1,\i+1) to [out=190,in=150] (0,0);
                \draw[-Stealth,densely dashed,ashgrey] (\i+1,2) to [out= 262,in=300] (0,0);
            }
            \foreach \i in {2,...,\xMax} {
                \draw[-Stealth,densely dashed,ashgrey] (2,\i+1) to [out=188,in=150] (0,0);
                \draw[-Stealth,densely dashed,ashgrey] (\i+1,3) to [out= 264,in=300] (0,0);
            }
                \draw[-Stealth,densely dashed,ashgrey] (0,0) to [out=260,in=280,loop] (0,0);
            
            \draw[-] (\xMax+1,\xMax+1) -- (\xMax+1.4,\xMax+1);
            \draw[-] (\xMax+1,\xMax+1) -- (\xMax+1,\xMax+1.4);
            
            \draw (0,0) node at (-0.4,-0.2) {$s_1,a$};
            
        \end{tikzpicture}
        \caption{Infinite grid that, after adding $s$ as label to every node, shows undecidability of shape implication \wrt constraints under the supported model semantics. The grey dashed arrows indicate the extension of the role $l$.}
        \label{fig:gridwithdiagonalE}
    \end{figure}

\begin{theorem}\label{thm:undecSupp} 
Shape implication w.r.t.\ constraints for (stratified) \alcio SHACL is undecidable, under the supported model semantics.
\end{theorem}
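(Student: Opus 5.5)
We reduce from the (undecidable) problem of tiling the infinite $\mathbb{N}\times\mathbb{N}$ grid with a finite set of tile types $\Theta$, obeying horizontal and vertical compatibility relations $H,V\subseteq\Theta\times\Theta$. We construct stratified \alcio constraint sets $\C_1,\C_2$ and shape names $s_1,s_2$ such that $(\C_1,s_1)$ does \emph{not} imply $(\C_2,s_2)$ under the supported model semantics iff $\Theta$ tiles the grid. The division of labour is as follows. $\C_1$ enforces the ``local'' tiling conditions and the existence of $r$- and $u$-successors everywhere. $\C_2$, exploiting brave validation, detects every way a graph can fail to be a genuine grid. A counterexample to implication is then exactly a graph that is locally tiled and contains no such failure, as in Figure~\ref{fig:gridwithdiagonalE}.

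\textbf{Step 1 (the constraints $\C_1$).} We use a fixed individual $a$ and a role $l$ that emulates a universal role by linking every grid node to $a$ (the dashed arrows in Figure~\ref{fig:gridwithdiagonalE}). Tile types are concept names $A_\theta$. We put
\[ s_1 \gets a \land \exists r.\top\land\exists u.\top \land \forall l^-.t, \]
\[ t \gets \textstyle\bigvee_\theta \big(A_\theta \land \bigwedge_{\theta'\neq\theta}\neg A_{\theta'}\text{-free encoding}\big)\land \exists r.\exists l.a\land\exists u.\exists l.a\land\forall r.\exists l.a\land\forall u.\exists l.a\land\psi_H\land\psi_V, \]
where $\psi_H$ expresses $\bigvee_{(\theta,\theta')\in H} (A_\theta\land\forall r.A_{\theta'})$ and $\psi_V$ does the same for $V$ and $u$. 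Since negation only applies to shapes, ``exactly one tile'' is handled by auxiliary shapes $\bar s_\theta\gets \lnot s_\theta$ with $s_\theta\gets A_\theta$ in a lower stratum. We also require $a\in\exists l.a$. All of this is nonrecursive, hence stratified, and its supported model is unique. Consequently, $s_1(a)$ holds exactly when every node $l$-linked to $a$ is tiled consistently, has $r$- and $u$-successors, and all its successors are again $l$-linked to $a$. By induction, every node reachable from $a$ via $r,u$ satisfies $t$.

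\textbf{Step 2 (the constraints $\C_2$, defect detection by guessing).} A guessed set is introduced by the supported-model constraint $x\gets x$, which is positive and thus stratified. We let
\[ d\gets (\exists r.x\land\exists r.\neg x)\lor(\exists u.x\land\exists u.\neg x)\lor(\exists r.\exists u.x\land\exists u.\exists r.\neg x), \]
with $\neg x$ obtained via a shape in a higher stratum, and set $s_2\gets a\land\exists l^-.d$. If some $l$-linked node has two distinct $r$-successors, two distinct $u$-successors, or an $r u$-successor different from a $u r$-successor, then choosing $x$ to be a singleton yields a supported model with $s_2(a)$. Conversely, if $r,u$ are functional and commute on all $l$-linked nodes, then $\exists r.\exists u.x$ and $\exists u.\exists r.x$ coincide for every choice of $x$, so no supported model puts $s_2(a)$.

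\textbf{Step 3 (correctness).} Suppose some graph and node satisfy $(\C_1,s_1)$ but not $(\C_2,s_2)$. The node is $a$, and its $r,u$-reachable part is functional and confluent, hence a homomorphic image of $\mathbb{N}\times\mathbb{N}$ that carries a locally correct tiling. Pulling this tiling back along the homomorphism gives a tiling of the grid. Conversely, a tiling yields the grid of Figure~\ref{fig:gridwithdiagonalE}, with $l$-edges to $a$, which satisfies $s_1$ and admits no defect.

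\textbf{Main obstacle.} I expect the delicate part to be Step 2: showing that failure of brave validation of $s_2$ exactly captures ``functional and commuting''. In particular, one must check that the guessed set and the negated shapes really remain stratified, and that $\exists r.\exists u.x$ versus $\exists u.\exists r.\neg x$ needs only singleton guesses to detect a defect. Once Step 2 is settled, the remaining steps are routine.
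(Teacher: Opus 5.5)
Your reduction is correct and follows the same route as the paper: tiling, a nominal $a$ with $l$-back-links, and a guessed labelling in the second constraint set that, under brave validation, detects a node whose $ru$- and $ur$-successors differ. The differences are minor. You force the grid skeleton non-recursively via $\forall l^-.t$ at $a$, so $\C_1$ has a unique supported model, rather than via the paper's positive recursion $s \gets \exists l.a \land \exists u.s \land \exists r.s \land \cdots$. You also guess a single shape $x$ together with its complement instead of two exclusive shapes $s_a, s_b$, and you add explicit functionality checks that the paper does without.
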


\begin{proof}
    We prove this by reducing the tiling problem, a well-known undecidable problem \cite{berger1966undecidability}, to our setting. To achieve this reduction we construct two shape pairs such that $(\C,s_1)$ implies $(\C',s')$ iff there exists a grid that can be decorated with a tiling $t: \mathbb{N} \times \mathbb{N} \rightarrow D$. 

    That is, let $\C = \{s_1 \gets a \land s, s \gets \exists l.a \land \exists u.s \land \exists r.s \land \bigvee_{d \in D}s_d \land \lnot s_{\bot}\} \cup \C_{\mathit{tiles}}$, where $\C_{\mathit{tiles}}$ contains the constraints ensuring the correct horizontal and vertical relations between the tiles, and sends the error shape $s_{\bot}$ back in case more than one tile was assigned to one node - as this is straightforward to construct, we will focus on the grid construction in the rest of this proof.
    Furthermore, we let $\C' = \{s_a \gets s_a, s_b \gets s_b, s' \gets \exists l^-.(\exists r\exists u.s_a \land \exists u.\exists r.s_b) \land \forall l^-.((s_a \lor s_b) \land \lnot (s_a \land s_b)\}$.

    Deciding whether $(\C,s_a)$ implies $(\C',s')$ is equivalent to deciding whether there exists a data graph $\G$ such that $\G$ satisfies $(\C,s_a)$, and not $(\C',s')$. Note that to satisfy $(\C,s_a)$ it must be possible to map an infinite binary tree (with $u$ and $r$ edges) into $\G$, with every node linking back via $l$ to the root $a$. For $\G$ to not satisfy $(\C',s')$ means it is not possible to properly decorate $\G$ with shape names. If we take a closer look at $\C'$, it is immediate that $s_a \gets s_a$ and $s_b \gets s_b$ cannot cause any problem. Therefore, the impossibility must lie in $s' \gets \exists l^-.(\exists r\exists u.s_a \land \exists u\exists r.s_b) \land \forall l^-.(s_a \lor s_b \land \lnot (s_a \land s_b)$. For a node to not be an $s'$ means that at least $\exists l^-.(\exists r\exists u.s_a \land \exists u\exists r.s_b)$ or $\forall l^-.((s_a \lor s_b) \land \lnot (s_a \land s_b)$ cannot be realised on the given structure. As the latter, an exclusive-or labelling nodes with either an $s_a$ or an $s_b$, cannot be blocked in any structure, it is $\exists l^-.(\exists r\exists u.s_a \land \exists u.\exists r.s_b)$ that must be blocked. As every node must have a $u$- and $r$-successor, and as $s_a$ and $s_b$ are mutually exclusive, this can only be blocked if for every node reachable by $l^-$, we find there can be only one node reachable by both $ru$ or $ur$. In other words, $(\C,s_a)$ does not imply $(\C',s')$ iff there exists a grid that can be decorated with tiles.
\end{proof}

Note that the problem remains undecidable even if we require the set of constraints to be the same in both documents: in the proof, we can replace both $\C$ and $\C'$ by their union.
Not surprisingly, the undecidability applies also to the stable semantics, but only in the non-stratified setting.

\begin{theorem}
    Shape implication w.r.t.\ constraints for \alcio SHACL is undecidable, under the stable model semantics.
\end{theorem}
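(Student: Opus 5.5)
We reduce from the supported-model case, Theorem~\ref{thm:undecSupp}, by adapting the reduction rather than reproving the grid encoding. The proof of Theorem~\ref{thm:undecSupp} uses two documents. The first, $(\C,s_1)$, is stratified. For stratified constraints the well-founded model is total, is supported, and is the unique stable model. So the stable semantics and the supported semantics agree on $(\C,s_1)$, and the side that forces the infinite $u/r$-structure with $l$-edges back to $a$ carries over verbatim. The only place where the two semantics differ is the second document $\C'$. There, the guessing rules $s_a \gets s_a$ and $s_b \gets s_b$ are self-supporting: under the supported semantics they let us freely choose an $s_a/s_b$ labelling, but under the stable semantics they are unfounded and force $s_a$ and $s_b$ to be false everywhere.

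The key step is therefore to replace the self-loops with the standard stable-model guess via even negative cycles: $s_a \gets \lnot \bar{s}_a$ and $\bar{s}_a \gets \lnot s_a$, and analogously for $s_b$. Better still, since exclusivity is needed, we use $s_a \gets \lnot s_b$ and $s_b \gets \lnot s_a$, which directly yields an exclusive-or labelling in every stable model. Every total assignment labelling each node by exactly one of $s_a,s_b$ is then a stable model of these two rules: the reduct makes the chosen atoms derivable and nothing else. Conversely, every stable model has this form. The defining rule for $s'$ is kept unchanged. It only refers positively or negatively to $s_a,s_b$, which have already been determined, so $s'$ is stratified above the guess. In any stable model, $s'$ therefore holds exactly where its body holds under the chosen labelling.

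With brave validation, a graph $\G$ satisfies $(\C',s')$ at a node iff some exclusive $s_a/s_b$ labelling makes the body of $s'$ true there. This is precisely the condition analysed in the proof of Theorem~\ref{thm:undecSupp}. The remaining argument then goes through as before. The body can be blocked only if every $l^-$-reachable node has its $ru$- and $ur$-successors coinciding, i.e.\ the structure forced by $(\C,s_1)$ is a grid, and that grid is tileable because of $\C_{\mathit{tiles}}$. Hence $(\C,s_1)$ does not imply $(\C',s')$ iff a tiling exists, which gives undecidability.

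The main obstacle is checking the stable-model claim for the modified $\C'$. We must verify, using the definition of \citet{andresel2020stable}, that the even-cycle encoding produces exactly all exclusive labellings, and that the $\exists$/$\forall$ constructs in the body of $s'$ behave as in the supported case once $s_a,s_b$ are fixed. Note that the result holds only in the non-stratified setting, since the guess inherently requires a negative cycle; this matches the remark preceding the theorem.
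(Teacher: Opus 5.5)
Your change to the second document is exactly the paper's: it also replaces the self-supporting rules by $s_a \gets \lnot s_b$ and $s_b \gets \lnot s_a$. The paper drops the now-redundant $\forall l^-$ conjunct of $s'$; keeping it is harmless. Your description of the stable models of the modified $\C'$ is correct.

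The gap is on the other side: the inference that the stable and supported semantics agree on $(\C,s_1)$ because $\C$ is stratified. For stratified sets the well-founded model is \emph{a} supported model and the \emph{unique} stable model. It need not be the unique supported model: with positive recursion there are in general many supported models, and the stable one is the minimal one.

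The reduction behind Theorem~\ref{thm:undecSupp} relies on a non-minimal one. The rule $s \gets \exists l.a \land \exists u.s \land \exists r.s \land \cdots$ holds at $a$ only because labelling the whole grid with $s$ supports itself, exactly as $s_a \gets s_a$ does. You correctly note that such self-support is unfounded under the stable semantics, but the same applies to $s$. Computing its stratum as a least fixed point, $\exists u.s$ is false while no $s$-atom has been derived, so no $s$-atom is ever derived. Thus $s$, and with it $s_1$, fails at every node of every graph; so do tile shapes $s_d$ if $\C_{\mathit{tiles}}$ guesses them by self-support. Consequently $(\C,s_1)$ is unsatisfiable under the stable semantics and implies every $(\C',s')$. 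Your reduction therefore sends every tiling instance to a yes-instance.

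To repair this, the first document must also avoid positive recursion. One option is to let the $l$-edges do the work of $s$. Take $L \gets \exists l.a$, $\mathit{ok} \gets \exists u.L \land \exists r.L \land \mathit{tile}$ and $s_1 \gets a \land L \land \forall l^-.\mathit{ok}$. Here $\mathit{tile}$ checks the tiling conditions on tiles given by concept names in the data, or guessed via even negative cycles as in $\C'$.

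With concept-name tiles this set is non-recursive, so its unique supported model is also its unique stable model. It forces what the rest of your argument needs: $a$ is an $l^-$-neighbour of itself, and every $l^-$-neighbour of $a$ has a $u$- and an $r$-successor that are again $l^-$-neighbours of $a$. Extracting a tiled grid from the failure of $s'$ at $a$ then works as before.

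The paper's one-line proof likewise modifies only $\C'$ and says the rest works ``in the same way'', so it does not address this point either.
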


To see this, update $\C'$ to $\C' = \{s_a \gets \lnot s_b, s_b \gets \lnot s_a, s' \gets \exists l^-.(\exists r\exists u.s_a \land \exists u\exists r.s_b)\}$. Here, we use the non-stratifiedness of $\C'$ to enforce the choice of an $s_a$ or $s_b$ label. The rest of the proof works in the same way.

Thus, we shift our attention to SHACL containment under the well-founded semantics in the rest of this article, and find we can recover decidability of containment. We note it will sometimes be convenient to treat implication as a special case of containment. 

\begin{pr}\label{pr:docsat-docimp}
A document $(\C,\T)$ is (finitely) satisfiable under \semantics iff $(\C,\T)$ does not (finitely)
  imply $(\emptyset,\emptyset)$ under \semantics.
\end{pr}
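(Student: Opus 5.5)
The proof unfolds the definition of implication and reduces the claim to one fact about the second document. By definition, $(\C,\T)$ (finitely) implies $(\emptyset,\emptyset)$ under \semantics iff every (finite) graph $\G$ that satisfies $(\C,\T)$ under \semantics also satisfies $(\emptyset,\emptyset)$. So $(\C,\T)$ does \emph{not} imply $(\emptyset,\emptyset)$ iff there is a (finite) graph $\G$ with two properties: $\G$ satisfies $(\C,\T)$, and $\G$ does not satisfy $(\emptyset,\emptyset)$. The proposition then follows from one lemma: under \semantics, \emph{no} graph satisfies $(\emptyset,\emptyset)$. With that lemma, the second property holds for every $\G$, and non-implication collapses to the existence of a (finite) graph satisfying $(\C,\T)$, which is (finite) satisfiability.

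Proving the lemma is the main obstacle, because it depends on conventions the paper has not yet fixed. Taken literally, $(\emptyset,\emptyset)$ looks \emph{trivially} satisfiable. Under the well-founded semantics, $\mathit{WF}_{\G,\emptyset}=\emptyset$, and $\emptyset$ vacuously validates the empty target set. Under the supported semantics, the empty assignment is total and is a supported model of the empty constraint set. Read this way, implication would always hold and the proposition would fail. So the first thing I will do is check how the paper intends $(\emptyset,\emptyset)$ to act as the "false" document. The most plausible reading is that it denotes a canonical unsatisfiable document in the implication setting. Then I will verify unsatisfiability for each semantics from the definitions.

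If the intended reading is a concrete unsatisfiable document, I will take $(\emptyset,\{s(a)\})$ with $s$ undefined. Under the well-founded semantics, $T_{\G,\emptyset}(S)=\emptyset$ for every $S$, so $s(a)\notin \mathit{WF}_{\G,\emptyset}$ and the target fails. For the supported and stable semantics I will instead use a genuinely contradictory target, for instance $s\gets \lnot s$ targeted at an individual, or a target whose shape has no derivation. I will check from the definitions of $\semantics(\G,\C)$ that no assignment validates it. The sub-step needing most care is this check for supported models, since any total assignment is a supported model of the empty constraint set. So for supported models the chosen document must constrain the target shape through a constraint rather than leaving it undefined.

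I also need to handle compatibility. If the false document mentions an individual $a$, then graphs are only quantified over if they contain $a$. Graphs satisfying $(\C,\T)$ but not containing $a$ must still witness satisfiability. I expect this to be fixable in one of two ways. One is to note that adding an isolated node $a$ to $\G$ does not change validation of $(\C,\T)$, since $a$ is unconnected to the rest of $\G$. The other is to pick the false document's individual among those already occurring in $\T$, or to avoid individuals altogether. Both directions of the equivalence are then immediate, and the finite and unrestricted versions go through identically.
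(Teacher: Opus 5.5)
The paper gives no proof of this proposition; it treats it as immediate, tacitly regarding $(\emptyset,\emptyset)$ as a document that no graph satisfies. You are right that the opposite holds. Every graph satisfies $(\emptyset,\emptyset)$ under all three semantics, so every document implies it, and the statement as written fails for every satisfiable document. The paper's own Theorem~\ref{thm:docImplicReduc} confirms this: $\Theta_{\emptyset,\emptyset}$ is the empty conjunction, so $\Lambda_{\emptyset,\emptyset,p}$ holds at every node and the formula there is unsatisfiable for every $(\C,\T)$. Replacing $(\emptyset,\emptyset)$ by a document that no compatible graph satisfies is the right repair, and after it the equivalence is the unfolding you give. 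A single document serves all three semantics: $(\{s\gets\lnot s\},\{s(a)\})$. Under the well-founded semantics $s(a)$ is never derived and $\{s(a)\}$ is never unfounded, so $s(a)$ stays undefined.

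The step that fails is your first compatibility fix. Adding an isolated node $a$ is harmless under the well-founded semantics. There, evaluation of shape expressions, and hence $T_{\G,\C}$ and the greatest unfounded set, works componentwise, and $a$ is untargeted if you introduce it by $B(a)$ with $B$ fresh. It is not harmless under the supported or stable semantics, where the assignment must be total over all nodes of the graph and the new node may admit no consistent value. Take $\C=\{t\gets\lnot t\lor\exists r.A\}$ and $\G=\{r(b,b),A(b)\}$. Then $\{t(b)\}$ is a supported (indeed stable) model. After adding an isolated $a$, the support condition at $a$ reads $t(a)\in S$ iff $\lnot t(a)\in S$, so the enlarged graph satisfies no document with constraint set $\C$.

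For these semantics use your second option, refined in two ways. First, take $a$ among the individuals of $(\C,\T)$, not just of $\T$, since nominals of $\C$ also occur in every compatible graph. Second, if there are none, rename some node of a satisfying graph to $a$, which preserves validation. That requires a nonempty witness, so the empty graph must be excluded by convention. Otherwise, under the supported semantics, $(\{s\gets\lnot s\},\emptyset)$ is satisfied by the empty graph alone yet implies every document: no nonempty graph satisfies it, and the empty graph satisfies every individual-free document. In that case no choice of ``false'' document saves the proposition. Finally, dropping individuals from the false document is not an alternative under the well-founded semantics, because every individual-free document is satisfied by $\{B(b)\}$ with $B$ fresh.
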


\section{Full Hybrid $\mu$-calculus}

We first provide some preliminary definitions for the full hybrid $\mu$-calculus.\footnote{$\mu$-calculus with inverse programs and nominals; we follow the naming convention of \cite{DBLP:journals/lmcs/BonattiLMV08}.}
Let $\var$ be a set of \textit{variables}, disjoint from $\NI, \NC$ and $\NR$.
Then, a full hybrid $\mu$-calculus formula is given by
\begin{align*}
    \Phi ::= A \mid \lnot A \mid a \mid \lnot a& \mid X \mid \Phi \land \Phi \mid \Phi \lor \Phi \mid \\
    &[r] \Phi \mid \langle r \rangle \Phi \mid \mu X.\Phi \mid \nu X.\Phi,
\end{align*}
for $A \in \NC$, $a \in \NI$, $X \in \var$, $r \in \roles$. Let $\top := A \lor \lnot A$, and $\bot := A \land \lnot A$ for a $A \in \NC$.

The set of \textit{subformulas} is defined recursively by setting $\sub(C) := \{C\}$ and $\sub(\lnot C) := \{\lnot C\}$ for each $C \in \NI \cup \NC \cup \var$, $\sub(\Phi * \Phi'):= \{\Phi * \Phi'\} \cup \sub(\Phi) \cup \sub(\Phi')$, for each $* \in \{\land, \lor\}$, $\sub(@\Phi) := \{@\Phi\} \cup \sub(\Phi)$ for each $@ \in \{\langle r\rangle,[ r ] \mid r \in \roles\}$, and $\sub(\sigma X.\Phi) := \{\sigma X.\Phi\} \cup \sub(\Phi)$, for $\sigma \in \{\mu,\nu\}$. We say $\Psi$ appears or is contained in $\Phi$ is $\Psi \in \sub(\Phi)$.

A variable $X \in \var$ is a \textit{free variable} if $X$ appears outside of the scope of a fixed point operator $\sigma X$, for $\sigma \in \{\mu,\nu\}$. A formula is called \textit{closed} if it contains no free variables. 

Given an data graph $\I$, the semantics $||\Phi||^\G_V \subseteq \Delta^\I$ of a closed formula $\Phi$ is defined as in Figure \ref{fig:mucalcsemantics}. 
Here, $V: \var \rightarrow 2^{\Delta^\G}$ is a function that maps each fixed point variable $X$ to a set of nodes. During evaluation, the function $V$ is updated as follows: 
$V[X \rightarrow U](X) := U$, and $V[X \rightarrow U](Y) := V(Y)$
for $Y \not = X$.

\begin{figure*}[t]
    \begin{align*}
    ||\Phi \land \psi||^\I_V &:= ||\Phi||^\I_V \cap ||\psi||^\I_V & ||A||^\I_V &:= A^\I\\
    ||\Phi \lor \psi||^\I_V &:= ||\Phi||^\I_V \cup ||\psi||^\I_V &
    ||\lnot A||^\I_V &:= \Delta^\I \setminus A^\I\\
    ||[ r ] \Phi||^\I_V &:= \{e \in \Delta^\I \mid \text{ if } (e,e') \in r^\I \text{, then }  e' \in ||\Phi||^\I_V\}&
    ||a||^\I_V &:= a^\I\\
    ||\langle r \rangle \Phi||^\I_V &:= \{e \in \Delta^\I \mid \text{ there is } e' \text{ s.t. } (e,e') \in r^\I, e' \in ||\Phi||^\I_V\} &
    ||\lnot a||^\I_V &:= \Delta^\I \setminus a^\I\\
    ||\mu X. \Phi||^\I_V &:= \bigcap \{U \subseteq \Delta^\I : ||\Phi||_{V[X \rightarrow U]} \subseteq U\} &
    ||X||^\I_V &:= V(X)\\
    ||\nu X. \Phi||^\I_V &:= \bigcup \{U \subseteq \Delta^\I : U \subseteq ||\Phi||_{V[X \rightarrow U]}\}
\end{align*}
    \caption{Semantics of $\mu$-calculus formulas.}
    \label{fig:mucalcsemantics}
\end{figure*}

For a closed formula $\Phi$, we may use $\I,c \models \Phi$ as a shorthand for $c \in ||\Phi||^\I_V$. Note that because each variable $X$ always occurs in the scope of a fixed point operator $\sigma X$ in a closed formula, it is unnecessary to provide a valuation function $V$.

Although unusual, we allow fixed point operators $\sigma X.\Psi$ such that $X$ does not occur in $\Psi$. We call a formula \textit{clean} if there do not appear such fixed point operators in the formula.
Note that if $X$ is not contained in $\Psi$, removing or adding $\sigma X$ does not influence the semantics of the formula. 
\begin{toappendix}
Every formula can be \textit{cleaned}, as defined recursively by the function $\final$, given in Figure \ref{tab:cleanise}, with the main advantage of improving the readability of the formula.

\begin{figure}[h]
    \begin{align*}
        \final(\Phi \land \Phi') &:= \final(\Phi) \land \final(\Phi') \\
        \final(\Phi \lor \Phi') &:= \final(\Phi) \lor \final(\Phi') \\
        \final(\langle r\rangle \Phi) &:= \langle r \rangle \final(\Phi) \qquad \final(A) := A \\ 
        \final([ r] \Phi) &:= [r] \final(\Phi) \qquad\;\, \final(a) := a \\
        \final(\lnot \Phi) &:= \lnot \final(\Phi) \qquad\;\, \final(X) := X \\ 
        \final(\sigma X.\Phi) &:= \begin{cases}
            \final(\Phi) & \text{if } X \in \sub(\Phi)\\
            \sigma X.\final(\Phi) & \text{if } X \not \in \sub(\Phi)
        \end{cases}
    \end{align*}
    \caption{Cleaning function.}
    \label{tab:cleanise}
\end{figure}
\end{toappendix}

\begin{pr}
    For each fully hybrid $\mu$-calculus formula, and each data graph $\I$ and node $c \in \NI$, we find
    $$
    \G,c \models \Phi \quad\text{ iff }\quad \G,c \models \final(\Phi).
    $$
\end{pr}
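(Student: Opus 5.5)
We prove a stronger statement by structural induction, namely that for \emph{every} (not necessarily closed) formula $\Phi$, every data graph $\G$ and every valuation $V:\var\rightarrow 2^{\Delta^\G}$,
$$
||\Phi||^\G_V = ||\final(\Phi)||^\G_V .
$$
The proposition is the special case of closed $\Phi$, where $V$ is irrelevant. Strengthening to open formulas is unavoidable, because the fixed-point clause evaluates its body under modified valuations $V[X\rightarrow U]$.

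We read the fixed-point clause of $\final$ with its intended meaning: the binder $\sigma X$ is dropped exactly when $X\notin\sub(\Phi)$, and kept (with $\final$ applied to the body) otherwise. This matches the definition of a clean formula and the remark before the proposition. With the two cases as printed the statement would fail: $\mu X.\langle r\rangle X$ would become $\langle r\rangle X$, which has a free variable.

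\textbf{Auxiliary facts.} Two lemmas come first.
\begin{itemize}
\item[(i)] \emph{Coincidence lemma.} If $X\notin\sub(\Psi)$, then $||\Psi||^\G_{V[X\rightarrow U]}=||\Psi||^\G_V$ for all $U$. This follows by a routine induction on $\Psi$. The only non-trivial case is $\sigma Y.\Psi'$ with $Y\neq X$, where one uses that $V[X\rightarrow U][Y\rightarrow W]=V[Y\rightarrow W][X\rightarrow U]$. If $Y=X$, the valuation of $X$ is overwritten anyway.
\item[(ii)] \emph{$\final$ preserves variable occurrences.} For all variables $X$, $X\in\sub(\final(\Phi))$ iff $X\in\sub(\Phi)$. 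This holds because $\final$ never deletes a variable leaf and only removes binders of variables that do not occur below them.
\end{itemize}

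\textbf{Induction.}
\begin{itemize}
\item For atoms, negated atoms and variables, $\final$ is the identity, so there is nothing to show.
\item For $\land$, $\lor$, $\langle r\rangle$ and $[r]$, the semantics in Figure~\ref{fig:mucalcsemantics} is compositional and $\final$ commutes with these constructors. The induction hypothesis, applied under the same $V$, gives the claim directly.
\item For $\sigma X.\Phi$ with $X\in\sub(\Phi)$, we have $\final(\sigma X.\Phi)=\sigma X.\final(\Phi)$. The induction hypothesis for $\Phi$ holds under every valuation, in particular under each $V[X\rightarrow U]$. Hence the families of pre-/post-fixed points $\{U \mid ||\Phi||_{V[X\rightarrow U]}\subseteq U\}$ (resp.\ $\supseteq$) coincide for $\Phi$ and $\final(\Phi)$, and so do their intersections and unions.
\end{itemize}

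\textbf{Main step.} The key case is $\sigma X.\Phi$ with $X\notin\sub(\Phi)$, where $\final(\sigma X.\Phi)=\final(\Phi)$. By (i), the map $U\mapsto||\Phi||_{V[X\rightarrow U]}$ is the constant $C:=||\Phi||_V$. Therefore
$$
||\mu X.\Phi||_V=\bigcap\{U\mid C\subseteq U\}=C
\quad\text{and}\quad
||\nu X.\Phi||_V=\bigcup\{U\mid U\subseteq C\}=C ,
$$
and $C=||\final(\Phi)||_V$ by the induction hypothesis. Fact (ii) ensures that the case distinction made on $\Phi$ is consistent with the cleaned body, so that the result of $\final$ has the same free variables as the input.

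I expect the only real subtlety to be setting up the induction over open formulas with arbitrary valuations, together with the coincidence lemma (i). Everything else is bookkeeping.
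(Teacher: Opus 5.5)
Your proof is correct and follows the paper's own justification. The paper supports the proposition only with the remark that a binder $\sigma X$ whose variable does not occur in its body can be dropped without changing the semantics; that remark is exactly your main step, and your induction over open formulas under arbitrary valuations, together with the coincidence lemma, supplies the routine details the paper omits. You are also right that the two cases in the printed clause for $\final(\sigma X.\Phi)$ are swapped, and the paper's later use of $\final$ in the finite-model-property example keeps the binders of variables that do occur, matching your reading.
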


A well-known classical result is that fixed points can be calculated by iteratively updating the valuation for $X$ with the result of the previous approximation (see, e.g.\ \citet{rudimentsofmucalc2001}). That is, the (non-transfinite version of the) approximations of the least and greatest fixed point are given by 
\begin{align*}
    ||\mu^0X.\Phi||^\I_V &:= \emptyset,\\
    ||\mu^{\alpha+1}X.\Phi||^\I_V &:= ||\Phi||^\I_{V[X\mapsto ||\mu^\alpha X.\Phi||^\I_V]}\\
    ||\mu^{\omega}X.\Phi||^\I_V &:= \bigcup_{\alpha<\omega}||\mu^\alpha X.\Phi||^\I_V,
\end{align*}
and
\begin{align*}
    ||\nu^0X.\Phi||^\I_V &:= \Delta^\G,\\
    ||\nu^{\alpha+1}X.\Phi||^\I_V &:= ||\Phi||^\I_{V[X\mapsto ||\nu^\alpha X.\Phi||^\I_V]},\\
    ||\nu^{\omega}X.\Phi||^\I_V &:= \bigcap_{\alpha<\omega}||\nu^\alpha X.\Phi||^\I_V.
\end{align*}

These approximations lead to the following infinite increasing resp. decreasing chains
\begin{align*}
    ||\mu^0X.\Phi||^\I_V &\subseteq ||\mu^1X.\Phi||^\I_V \subseteq \ldots \subseteq ||\mu^\alpha X.\Phi||^\I_V \subseteq \ldots\\
    ||\nu^0X.\Phi||^\I_V &\supseteq ||\nu^1X.\Phi||^\I_V \supseteq \ldots \supseteq ||\nu^\alpha X.\Phi||^\I_V \supseteq \ldots
\end{align*}
Moreover, as we assume $|\Delta^\G| \leq \omega$, we can conclude the following.

\begin{pr}
    For all full hybrid $\mu$-calculus formulas $\sigma X.\Phi$, we find
    \begin{align*}
        ||\sigma^{\omega}X.\Phi||^\I_V = ||\sigma X. \Phi||^\I_V.
    \end{align*}
\end{pr}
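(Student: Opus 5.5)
The plan is to exploit the fact that, by definition, a data graph $\G$ is a \emph{finite} set of atoms, so $\Delta^\G$ is finite. We will then argue that each approximation chain stabilises after finitely many steps, and that the stable value is exactly the Knaster--Tarski fixed point used in Figure~\ref{fig:mucalcsemantics}. We treat $\sigma=\mu$ in detail; the case $\sigma=\nu$ is entirely dual (swap $\emptyset$ with $\Delta^\G$, $\subseteq$ with $\supseteq$, and unions with intersections).

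\textbf{Step 1 (monotonicity).} First we show, by structural induction on $\Phi$, that for every valuation $V$ and every variable $X$ the map $U\mapsto ||\Phi||^\G_{V[X\mapsto U]}$ is monotone. This holds because the grammar only allows negation in front of concept names and individuals, never in front of variables. The cases $\land,\lor,[r],\langle r\rangle$ are immediate from Figure~\ref{fig:mucalcsemantics}. For the binders $\mu Y.\Psi$ and $\nu Y.\Psi$, enlarging $V(X)$ enlarges the family of post-fixed points, respectively shrinks the family of pre-fixed points, so the resulting union or intersection grows. We then check, by induction on $\alpha$, that the approximants form an increasing chain: $||\mu^0X.\Phi||=\emptyset$ is contained in everything, and monotonicity propagates the inclusion from one stage to the next.

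\textbf{Step 2 (stabilisation and identification).} Since $\Delta^\G$ is finite, a strictly increasing chain of its subsets has length at most $|\Delta^\G|$. Hence there is an $n\le|\Delta^\G|$ with $||\mu^{n}X.\Phi||=||\mu^{n+1}X.\Phi||$. From that point the chain is constant, so $||\mu^\omega X.\Phi||=||\mu^n X.\Phi||=:F$, and $F$ is a fixed point, hence in particular a pre-fixed point. Therefore $||\mu X.\Phi||\subseteq F$, because the former is the intersection of all pre-fixed points. For the converse inclusion, take any $U$ with $||\Phi||_{V[X\mapsto U]}\subseteq U$. By induction on $\alpha$ we get $||\mu^\alpha X.\Phi||\subseteq U$, using monotonicity for the successor step. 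Thus $F\subseteq U$, and so $F\subseteq ||\mu X.\Phi||$.

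The only potentially delicate point is the remark "$|\Delta^\G|\le\omega$". For genuinely infinite (countable) domains, $\omega$ approximants need not suffice for $\mu$ under $[r]$ with infinite branching. I therefore expect the main obstacle to be justifying that we are in the finite case. I would handle it by appealing directly to the definition of data graphs as finite atom sets. If infinite structures are later needed, for instance for unrestricted satisfiability, the statement would have to be restricted to finitely branching graphs, where $\omega$-continuity of $[r]$ and $\langle r\rangle$ gives the same conclusion via Kleene's theorem.
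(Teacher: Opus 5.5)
Your argument is correct for the statement as the paper literally sets it up. Data graphs are finite sets of atoms, so $\Delta^\G$ is finite. Monotonicity, stabilisation of the approximant chain after at most $|\Delta^\G|$ steps, and the pre-/post-fixed-point characterisation of Figure~\ref{fig:mucalcsemantics} then give the equality for $\mu$ and, dually, for $\nu$. The paper offers no separate proof: it appeals to the classical approximation result together with the assumption $|\Delta^\G|\le\omega$. Your route via finiteness is the sound one, and your suspicion about countable domains is justified. Take $\mu X.[r]X$ on a countable graph whose root $c$ has, for each $n$, a disjoint $r$-path of length $n$ ending in a node without $r$-successors. Every node except $c$ lies in $||\mu^\omega X.[r]X||$, while $c$ enters only at stage $\omega+1$.

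Your fallback remark, however, is wrong. Restricting to finitely branching graphs does not help once a $\nu$-subformula with $X$ free is nested inside $\mu X$, which is exactly the shape the translation produces. Such a subformula is not $\omega$-continuous in $X$, even when $[r]$ and $\langle r\rangle$ are, so Kleene's theorem does not apply. For instance, $\mu X.(A\lor\nu Y.(\langle p\rangle X\land\langle r\rangle Y))$ is, up to renaming of variables, $\trpe(s_0)$ for $\{s_0\gets A\lor\lnot t,\ t\gets\forall p.\lnot s_0\lor\forall r.t\}$. Evaluate it on the graph with nodes $w_i,v_i$ ($i\ge 0$), $A=\{w_0\}$, and edges $p(w_{i+1},w_i)$, $r(w_{i+1},w_{i+1})$, $p(v_i,w_i)$, $r(v_i,v_{i+1})$. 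This graph is countable and finitely branching in every direction. Its $n$-th approximant is $\{w_0,\ldots,w_{n-1}\}$, because the $r$-path from any $v_i$ reaches some $v_j$ with $j\ge n$, whose $p$-successor $w_j$ is not yet included. Yet $v_0$ enters at stage $\omega+1$.

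The paper also relies on this proposition for unrestricted satisfiability, e.g.\ in the proof of Corollary~\ref{cor:shapeSat}, where infinite models cannot be avoided (cf.\ Proposition~\ref{prop:fmp}). The proper repair there is transfinite approximants $\sigma^\alpha$, with unions resp.\ intersections at limit stages, up to the closure ordinal. Your proof covers only the finite variants of the problems.
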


\section{Translating SHACL into $\mu$-calculus}

The goal of this section is to provide a translation that takes as input a SHACL specification 
$\C$ and target shape name $s$ and produces a $\mu$-calculus formula $\Phi_{\C,s}$ such that, for every graph $\G$, the nodes that validate $s$ with respect to $\C$ are exactly the nodes that satisfy $\Phi_{\C,s}$. 

The translation is presented in Figure~\ref{fig:translationfunctions}  
and uses two mutually recursive functions, $\trp$ and $\trm$. The desired $\Phi_{\C,s}$ will be the (cleaned version of) $\trpe(s)$. 
Recall that the well-founded model is defined as the least fixed point of the operator $W_{\A,\C}$ in (\ref{eq:wfs-def}); it is thus not surprising that $\trpe(s)$ takes the form of a least fixed point $\mu X \varphi$. 
The positive part $\trp$ will mimic the derivation of the positive atoms that are introduced by $T_{\A,\C}(S)$ during the well-founded model computation, while the 
$\trm$ part will be the counterpart of the negative atoms added by $\lnot . U_{\A,\C}(S)$.
It is this second part that is more interesting. 
Recall that $U_{\A,\C}(S)$ is the \textit{greatest} unfounded set, that is, the largest set of atoms that do not break a certain condition ($a \not \in \mtrue{\C(s)}{\A}{S \cup \lnot.U}$ for all $s(a) \in U$).
This seems to naturally correspond to a greatest fixed point, but we need a bit more than just a  $\nu$: note that when deciding whether $b \in \mtrue{s}{\G}{S \cup \lnot.U}$, we do not just check whether 
$\lnot s(b) \not \in \lnot. U$, but also whether 
$\lnot s(b) \not \in S$, that is, the extension of the set is affected not only by the set $U$ that is being computed as a greatest fixed point, but also by the set $S$ that are computed by the least fixed point corresponding to $W_{\A,\C}(S)$. Therefore, our translation produces a formula with limited alternation: it takes the form $\mu X.\varphi(X)$, where $\varphi(X)$ may contain subformulas of the form $\nu Y.\psi(X,Y)$. There are, however, no fixed points nested within greatest fixed points: observe that the subscript $S$ is changed to $\pos(S)$ when we move to the negative part $\trp$ of the translation.

\begin{figure*}[t]
    \begin{align*}
        \trp(s \land s') &:= \trp(s) \land \trp(s') &
        \trm(s \land s') &:= \trm(s) \lor \trm(s') \\
        \trp(s \lor s') &:= \trp(s) \lor \trp(s') &
        \trm(s \lor s') &:= \trm(s) \land \trm(s') \\
        \trp(\exists r.s) &:= \langle r \rangle \trp(s) & 
        \trm(\exists r.s) &:= [r] \trm(s) \\
        \trp(\forall r.s) &:= [r] \trp(s) &
        \trm(\forall r.s) &:= \langle r \rangle \trm(s) \\
        \trp(A) &:= A &
        \trm(A) &:= \lnot A\\
        \trp(a) &:= a &
        \trm(a) &:= \lnot a\\
        \trp(\lnot s) &:= \trm(s) & 
        \trm(\lnot s) &:= \tr^+_{\pos(S),\C}(s) \\
        \trp(s) &:= \begin{cases}
            X_s &\text{if } s \in S\\
            \mu X_s.\mathit{tr}^+_{S \cup \{s\},\C}(\C(s)) &\text{if } s \not \in S
        \end{cases}&
        \trm(s) &:= \begin{cases}
            X_{\Bar{s}} &\text{if } \Bar{s} \in S\\
            \nu X_{\Bar{s}}.\mathit{tr}^-_{S \cup \{\Bar{s}\},\C}(\C(s)) &\text{if } \Bar{s} \not \in S
        \end{cases}
    \end{align*}
   \caption{Translation functions, note the negated translation of negation, $\trp(\lnot s)$: the well-founded semantics does not treat positive and negative information equally, which is captured by only keeping the positive shape names in the set $S$: $\pos(S) := S \setminus \{\Bar{s} \in S\}$.}
    \label{fig:translationfunctions}
\end{figure*}

In the following example, we illustrate that this limited alternation is needed. That is, the alternation-free $\mu$-calculus is not expressive enough to capture the well-founded semantics.

\begin{vb}\label{vb:mantasexample2}
Recall the constraints $\C = \{s \gets A \lor \exists p.\lnot t, t \gets \lnot s \lor \exists r. t\}$ discussed in Example \ref{vb:mantasexample1}. Then
\begin{align*}
    \trpe(s) &= \mu X_s. (A \lor \langle p \rangle \nu X_{\Bar{t}}.(X_s \land [r] X_{\Bar{t}})),
\end{align*}
which is a formula expressing a property not expressible in alternation free $\mu$-calculus. Note that if we evaluate $\trpe(s)$ over $\G_1$, we find $||\trpe(s)||^{\G_1}_V = \{0,\ldots,6\}$.
\end{vb} 

\begin{toappendix}

Note that the constraint definitions that are not reachable from a target are irrelevant to its validation, and hence to test $s$ we can just use $\C_s$, defined in the following way.

\newcommand{\clo}{\mathit{cl}}

\begin{definition}
    Given a constraint set $\C$, let $\clo_\C(s)$, the \textit{closure} of a shape name $s$ w.r.t. $\C$, be defined as the smallest set of shape names $s' \in \NS$ such that $s \in \clo_\C(s)$, and furthermore, if $s' \in \clo_\C(s)$ and $s' \gets \varphi \in \C$ such that $s'' \in \sub(\varphi)$, then $s'' \in \clo_\C(s)$.
    Let the \textit{restriction} of a constraint set $\C$ to a shape name $s$ be the following constraint set:
    $$
    \C_s := \{s' \gets \varphi \in \C \mid s' \in \clo_\C(s)\}.
    $$
    We call these constraints \emph{reachable from} $s$.
\end{definition}

Recall that also $s' \in \sub(\lnot s')$. It is straightforward the following must hold.
\begin{lemma}
    For each constraint set $\C$ and interpretation $\I$, we have $\I$ validates $(\C,s(a))$ iff $\I$ validates $(\C_s,s(a))$.
\end{lemma}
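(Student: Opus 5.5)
The plan is to show directly that the well-founded model computed with respect to $\C$ and the one computed with respect to $\C_s$ agree on every shape name in $\clo_\C(s)$. Let $K := \clo_\C(s)$. For a shape assignment $S$, write $S|_K$ for the set of literals in $S$ whose shape name lies in $K$. The statement then follows from the claim
\[
\mathit{WF}_{\G,\C}|_K = \mathit{WF}_{\G,\C_s}|_K .
\]
Indeed, $\G$ validates $(\C,s(a))$ iff $s(a)\in\mathit{WF}_{\G,\C}$, and since $s\in K$ this holds iff $s(a)\in\mathit{WF}_{\G,\C_s}$.

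The key locality observation comes first. For any $s'\in K$, every shape name occurring in $\C(s')$ lies in $K$, by the definition of the closure (recall $s''\in\sub(\lnot s'')$). Inspecting the clauses of Figure~\ref{fig:expr-eval}, $\ptrue{\varphi}{\G}{S}$ and $\mtrue{\varphi}{\G}{S}$ depend only on the literals of $S$ over shape names occurring in $\varphi$. A routine induction on $\varphi$ makes this precise: if $S|_K=S'|_K$ and all shape names of $\varphi$ are in $K$, then $\pmtrue{\varphi}{\G}{S}=\pmtrue{\varphi}{\G}{S'}$. Since $\C(s')=\C_s(s')$ for $s'\in K$, it follows that $T_{\G,\C}(S)|_K = T_{\G,\C_s}(S')|_K$ whenever $S|_K=S'|_K$.

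The unfounded-set part is the main obstacle, since the greatest unfounded set is a global object. I would prove that $U_{\G,\C}(S)|_K = U_{\G,\C_s}(S')|_K$ whenever $S|_K=S'|_K$, arguing in two directions.
\begin{itemize}
\item If $U$ is unfounded with respect to $S$ and $\C$, then $U|_K$ is unfounded with respect to $S'$ and $\C_s$. For each $s'(b)\in U|_K$, the value of $\mtrue{\C(s')}{\G}{\cdot}$ is determined by $K$-literals. The $K$-part of $S\cup\lnot.U$ coincides with that of $S'\cup\lnot.U|_K$, so $b$ remains excluded.
\item Conversely, an unfounded set $U'$ for $\C_s$ mentions only shape names in $K$, because atoms over undefined names are never unfounded. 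This uses the convention that $\C(s')$ is absent, hence unsatisfiable, for undefined names, which I would fix explicitly. By the same argument, $U'$ is unfounded for $\C$.
\end{itemize}
Taking greatest sets, the two greatest unfounded sets agree on $K$.

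Finally, combining the $T$ and $U$ parts gives $W_{\G,\C}(S)|_K = W_{\G,\C_s}(S')|_K$ whenever $S|_K=S'|_K$. A transfinite induction over the iterations from $\emptyset$, taking unions at limit stages, then shows that the approximants agree on $K$ at every stage. Hence the least fixed points agree on $K$, which is the claim.
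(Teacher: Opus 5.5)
Your proof is correct. The paper gives no argument for this lemma; it only remarks that it is straightforward. Your locality argument is a sound way of supplying the missing details:
\begin{itemize}
\item the operators $T_{\G,\C}$ and $T_{\G,\C_s}$ agree on $K=\mathit{cl}_{\C}(s)$ whenever their arguments agree on $K$;
\item so do the greatest unfounded sets;
\item induction along the transfinite iteration of $W$ from $\emptyset$ then gives agreement of the least fixed points on $K$.
\end{itemize}
You also correctly read the lemma under the well-founded semantics, which is what the appendix context requires.

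One small repair is needed in your second bullet. The assertion that an unfounded set $U'$ for $\C_s$ mentions only names in $K$ depends on how undefined names are treated, and as literally phrased your convention points the other way. If an absent $\C(s')$ is read as an unsatisfiable expression $\bot$, then $\mtrue{\bot}{\G}{S}=\emptyset$, so atoms over undefined names are \emph{always} unfounded, not never. In particular, every atom over a name defined in $\C$ but lying outside $K$ would belong to $U_{\G,\C_s}(S')$. Then ``$U'$ is unfounded for $\C$'' can fail.

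You do not need the assertion. Argue, exactly as in your first bullet, that $U'|_K$ is unfounded w.r.t.\ $S$ and $\C$. That is all you need for $U_{\G,\C_s}(S')|_K\subseteq U_{\G,\C}(S)|_K$. It also makes the argument independent of the convention: names in $K$ that are undefined in $\C$ are undefined in $\C_s$ too, so both sides treat them identically.
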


Without loss of generality, the proofs below assume that all constraints in $\C$ are reachable from $s$.
We first establish an assisting lemma on the structure of the constructed modal $\mu$-calculus formulas.

\begin{lemma}\label{lemma:nu_loops}
    Given $\nu X.\varphi \in \sub(\trpe(s))$ such that $X \in \sub(\varphi)$, then there is no $\mu X.\psi \in \sub(\nu X.\varphi)$ such that $X \in \sub(\psi)$.
\end{lemma}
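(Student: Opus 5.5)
The plan is to prove, by induction on the recursive calls of $\trp$ and $\trm$, an invariant about the index set $S$ that records which variables may occur free in the output. I read the lemma in its intended sense: an occurrence of $X$ in $\psi$ counts only if it is bound by the outer $\nu X$. Otherwise the statement is immediate, because the translation only generates $\mu$-binders for variables $X_{s}$ and $\nu$-binders for variables $X_{\Bar{s}}$, so the two binders cannot bind the same variable. The real content is therefore: no least fixed point nested inside a greatest fixed point $\nu X_{\Bar{s}}.\varphi$ mentions the variable $X_{\Bar{s}}$ bound by that $\nu$.

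The first step is a freeness invariant. For every $S$ and every shape expression $\chi$, the free variables of $\trp(\chi)$ and of $\trm(\chi)$ are contained in $\{X_t \mid t\in S\}\cup\{X_{\Bar{t}}\mid \Bar{t}\in S\}$. I prove this by structural induction following Figure~\ref{fig:translationfunctions}. The only interesting clauses are:
\begin{itemize}
\item the base cases $\trp(s)$ and $\trm(s)$, which either return a variable indexed by an element of $S$, or bind the newly added index immediately;
\item the clause $\trm(\lnot s)=\tr^+_{\pos(S),\C}(s)$, where the index set only shrinks.
\end{itemize}
Termination of the translation, and hence well-foundedness of this induction, follows because $S$ grows at each unfolding and $\C$ is finite.

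The second step is to observe where $\mu$-binders can arise. They are produced only by the clause $\trp(s)$ with $s\notin S$. Inside the body $\mathit{tr}^-_{S\cup\{\Bar{s}\},\C}(\C(s))$ of a $\nu X_{\Bar{s}}$, the recursion stays within $\trm$ for all connectives and quantifiers. The only way to switch back to the positive translation is the clause for $\lnot s'$, which calls $\tr^+_{\pos(S'),\C}$. Thus every $\mu$-subformula of $\nu X_{\Bar{s}}.\varphi$ lies inside some subformula $\tr^+_{\pos(S'),\C}(s')$ with $\Bar{s}\in S'$.

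By the first step, $X_{\Bar{s}}$ is not free in such a subformula, since $\Bar{s}\notin\pos(S')$. Any occurrence of a variable named $X_{\Bar{s}}$ inside it must therefore be bound by a fresh, inner $\nu X_{\Bar{s}}$. Such an inner binder arises when $\trm(s)$ is called again with $\Bar{s}\notin$ the current set, and it is not the outer binder. Hence no $\mu X.\psi$ below the outer $\nu X$ contains an occurrence of $X$ bound by that $\nu$.

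I expect the main obstacle to be the first step, specifically stating the invariant carefully enough to handle re-insertion of $\Bar{s}$ after a $\pos$-reset, which creates shadowing binders with the same name. I would handle this by formulating the invariant in terms of free variables rather than raw occurrences. An alternative is to note that one may rename bound variables apart without affecting $\|\cdot\|$.
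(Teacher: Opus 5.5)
Your proof is correct and takes the same approach as the paper's one-line argument: the only way back from $\trm$ into the positive translation is the clause $\trm(\lnot s)=\tr^+_{\pos(S),\C}(s)$, which drops all barred indices, and your free-variable invariant makes this rigorous. Your reading in terms of occurrences bound by the outer $\nu$ is the right one, since shadowing re-bindings of $X_{\Bar{s}}$ below a $\mu$ really occur (e.g.\ for $s\gets\lnot t$, $t\gets t_1\lor t_2$, $t_1\gets\exists r.t$, $t_2\gets\lnot u$, $u\gets\lnot t$); one small fix is that $S$ does not grow monotonically along a recursion path (it shrinks at the $\pos$-reset), so argue termination from the facts that $\pos(S)$ never shrinks, every reset either returns a variable at once or strictly enlarges $\pos(S)$, and between resets each unfolding strictly enlarges $S$.
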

Assuming the contrary quickly leads to the conclusion that in that case $X$ would not have been introduced; as only the positive shape names remain in the rule $\trm(\lnot s) := \tr^+_{\pos(S),\C}(s)$.

Given an interpretation $\I$, let $\atoms_\I$ be a (partial) function that maps (approximations of) modal mu-calculus formulas to sets of atoms, defined in the following way
\begin{align*}
    \atoms_{\I}&(\mu^i X_s.\varphi) := \{s(a) \mid a \in ||\mu^i X_s. \varphi||^\I_V\} \\
    &\cup \{s'(a) \mid \mu X_{s'}.\psi \in \sub(\varphi), a \in ||\mu X_{s'}. \psi||^\I_V\} \\
    &\cup \{\lnot s'(a) \mid \nu X_{\Bar{s'}}.\psi \in \sub(\varphi), a \in ||\nu X_{\Bar{s'}}. \psi||^\I_V\}.
\end{align*}

We use the notation $S_j$ as a shorthand for $W^{\uparrow j}_{\I,\C}(\emptyset)$, whenever $\I$ and $\C$ are understood from the context.

\begin{pr}\label{prop:well-founded_to_mu-calc}
    For each constraint set $\C$, each shape name $s$, each interpretation $\I$ and each natural number $i$, if $t'(b) \in W^{\uparrow i}_{\I,\C_s}(\emptyset)$, for $t' = s'$, or $t' = \lnot s'$, then there exists a $j$ such that $t'(b) \in \atoms_{\I}(\mu^j X_s.\varphi)$, for $\mu X_s.\varphi = \trpe(s)$.
\end{pr}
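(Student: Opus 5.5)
Induction on $i$: everything the well-founded model derives in finitely many steps is eventually captured by some finite approximation of the outer least fixed point of $\trpe(s)$, via $\atoms_\I$. Two observations come first. The map $j \mapsto \atoms_\I(\mu^j X_s.\varphi)$ is monotone: its first component grows with $j$, and the other two components do not depend on $j$ at all, since they use the full fixed points of subformulas of $\varphi$. Moreover, a subformula $\mu X_{s'}.\psi$ or $\nu X_{\bar{s'}}.\psi$ of $\varphi$ may contain free variables bound further out. To make sense of its extension, we evaluate it under the valuation $V$ that sends each outer variable to the corresponding (approximated) fixed point. I will interpret $\atoms_\I$ this way, and choose $j$ large enough that the relevant outer approximations have already absorbed what is needed.

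For the induction itself, the base case $i=0$ is vacuous since $S_0=\emptyset$. For the step, take $t'(b)\in S_{i+1}=W_{\I,\C}(S_i)$. By the induction hypothesis there is a single $j_0$ covering all of $S_i$ on the relevant (finitely many) individuals. To see this, note that the IH only gives one $j$ per literal, but a derivation of $t'(b)$ only inspects the neighbourhood used to witness $\ptrue{\cdot}{}{}$/$\mtrue{\cdot}{}{}$. Alternatively, we can avoid finiteness entirely by using the $\omega$-approximation proposition and taking $j=\omega$ where needed. Then split into two cases.

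\emph{Positive case.} Here $t'=s'$ and $b\in\ptrue{\C(s')}{\I}{S_i}$. By structural induction on normal-form constraints, I will show: if $b\in\ptrue{\psi}{\I}{S_i}$ then $b$ satisfies the translation $\trp(\psi)$, with $S$ the set of currently bound variables, under the valuation induced by $\atoms_\I(\mu^{j_0}X_s.\varphi)$. The cases $\land,\lor,\exists,\forall,A,a$ are immediate from the clauses of the translation. A shape name uses the IH. A negated name $\lnot s''$ maps to $\trm(s'')$, reducing to the negative case. Unfolding one more step of the $\mu$ then puts $s'(b)$ into the set at stage $j_0+1$.

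\emph{Negative case.} Here $t'=\lnot s'$ and $s'(b)\in U:=U_{\I,\C}(S_i)$. I claim that the set $\{c \mid s''(c)\in U\}$, arranged per shape name, is a post-fixed point of the body of $\nu X_{\bar{s''}}.\trm(\C(s''))$. Concretely, $c\notin\mtrue{\C(s'')}{\I}{S_i\cup\lnot.U}$ implies that $c$ satisfies $\trm(\C(s''))$ when $X_{\bar{\cdot}}$ is valued by $U$ and the positive/outer variables are valued as in $\atoms$ at stage $j_0$. This is again a structural induction, dualised: $\mtrue{\cdot}{}{}$ failing for $\land$ gives $\lor$ of $\trm$, and so on. At the leaves, $\mtrue{s''}{}{}$ failing means $\lnot s''(c)\in S_i\cup\lnot.U$. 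This is handled either by $U$ (the $\nu$-variable) or by $S_i$, where the IH puts $\lnot s''(c)$ into $\atoms$ via an already computed $\nu$-subformula. For $\mtrue{\lnot s''}{}{}$ failing we get $s''(c)\in S_i$, handled by $\tr^+_{\pos(S)}$, which uses only positive outer variables; Lemma~\ref{lemma:nu_loops} guarantees that no $\mu$-variable is rebound inside, so the IH applies. Since $\nu$ is the greatest post-fixed point (the Knaster–Tarski characterisation in Figure~\ref{fig:mucalcsemantics}), $b$ lies in the $\nu$-extension, hence $\lnot s'(b)\in\atoms$.

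The main obstacle is this negative case. The $\nu$-subformulas occur at several positions in the unfolded formula, with different sets of bound variables, so one must check carefully that the valuation induced by $\atoms$ and by $U$ is consistent across occurrences. One must also check that the monotonicity of the formula in the outer $\mu$-variables lets the stage-$j_0$ values stand in for the full $S_i$. I would handle this by proving a single auxiliary lemma on the translation, parameterised by $S$ and a valuation $V$ that is \emph{sound} w.r.t.\ a shape assignment (each $X_{s''}$ contains the nodes $c$ with $s''(c)$ in the assignment, and each $X_{\bar{s''}}$ those with $\lnot s''(c)$ in it). The lemma covers both the $\ptrue{\cdot}{}{}$-to-$\trp$ and the $\mtrue{\cdot}{}{}$-to-$\trm$ directions, and the proposition follows by instantiating it.
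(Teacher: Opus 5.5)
Your overall route is the paper's. You do induction on $i$, get positive literals by unfolding the $\mu$-subformulas along the normal-form constraint, and get negative literals by showing that the greatest unfounded set, split per shape name, is a post-fixed point of the $\nu$-bodies. The paper's case~(d), with its axiom chains and the sets $L_{i_j}$, is exactly this Knaster--Tarski argument, and your sound-valuation lemma is a tidier way to handle the several occurrences of the same fixed-point subformula.

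The step that fails as written is the choice of a single $j_0$; the paper's proof is silent on this point too. The induction hypothesis gives one $j$ per literal. The post-fixed-point argument, however, needs $X_s$ valued soundly for \emph{all} positive literals on which the whole unfounded set depends. The statement covers infinite interpretations, and the paper uses that case: Proposition~\ref{prop:fmp} applies the translation to a shape satisfiable only in infinite graphs. There an unfounded set, or a $\forall r$ with infinitely many successors, can depend on infinitely many literals, so the ``finitely many relevant individuals'' argument breaks down.

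Your fallback $j=\omega$ only works if $\omega$ approximation stages suffice, and that fails once fixed points alternate. Take the formula of Example~\ref{vb:mantasexample2} and an infinite $r$-chain $c_0c_1c_2\ldots$ in which $c_n$ reaches an $A$-node only by a $p$-path of length $n+1$. Then $c_n$ enters only at stage $n+2$. A node with a $p$-edge to $c_0$ therefore enters only at stage $\omega+1$.

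The missing idea is an explicit stage bound, proved by induction on $i$ with $S_i = W^{\uparrow i}_{\I,\C_s}(\emptyset)$. It has two parts:
\begin{itemize}
\item $\{c \mid s(c)\in S_i\}\subseteq ||\mu^{i}X_s.\varphi||^\I_V$;
\item with $X_s$ valued by $||\mu^{i}X_s.\varphi||^\I_V$ and enclosing bound variables by their fixed points, every literal of $S_i$ belongs to \emph{every} occurrence of the corresponding $\mu X_{t}$- or $\nu X_{\bar t}$-subformula.
\end{itemize}
Your lemma, with soundness indexed by the stage, proves the step from $i$ to $i+1$ without any finiteness assumption and gives $j=i$.

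A smaller repair is needed at the leaves of the negative case. Suppose $\trm(s'')$ is the bound variable $X_{\bar{s''}}$ and $\lnot s''(c)$ comes from $S_i$ rather than from $U$. Then the hypothesis about $\atoms_\I$ does not help, because you need $c\in V(X_{\bar{s''}})$. This does hold: the negative literals of $S_i$ are $\lnot.U_{\I,\C}(S_{i-1})$, and $U_{\I,\C}(S_{i-1})\subseteq U_{\I,\C}(S_i)$ because $S_{i-1}\subseteq S_i$ and $\mtrue{\cdot}{\I}{S}$ is antimonotone in $S$. So that branch is subsumed by the $U$-branch.
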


\begin{proof}
     Note that because we consider $\C_s$, there exists a subformula of the form $\{ \mu X_{s''}.\psi, \nu X_{\Bar{s''}}.\psi\}\cap \sub(\trpe(s)) \not = \emptyset$ for each $s'' \gets \varphi \in \C_s$. Moreover, as the function $\trpe$ is alternating between the positive (only introducing subformulas of the form $\mu X.\psi$) and negative version, $\trm$ (only introducing $\nu X.\psi$-subformulas) exactly when some $s \gets \lnot s'$ shows up, we must also conclude that if $s''(b)\in S_i$ for some $i$, there must be $\mu X_{s''}.\psi \in \sub(\trpe(s))$, and similarly with $\nu X_{\Bar{s''}}$ for $\lnot s''(b) \in \S_i$. 
     We prove this proposition by induction on $i$. For the induction basis, note $W^{\uparrow 0}_{\I,\C_s}(\emptyset) = S_0 = \emptyset$. For the induction step, assume there exists some $t(b) \in S_i \setminus S_{i-1}$. We distinguish two cases: (i) $t = s'$ and (ii) $t = \lnot s'$.
    \begin{itemize}
        \item[(i)] Let $s' \gets \varphi \in \C$ be the unique constraint with head $s'$. We distinguish the following cases.
        \begin{itemize}
            \item[-] $\varphi = A$. Note that $s'(b) \in S_i$ implies that $b \in A^\I$. Furthermore, we find there exists some $\mu X_{s'}.\psi \in \sub(\trpe(s))$. Note that $\psi = \tr^+_{S\cup s',\C_s}(A) = A$, independent of what is contained in $S$. As $||\mu X_s'.A||^\I_V = A^\I$, we find $s'(b) \in \atoms_{\I}(\mu^j X_s.\varphi)$ for each $j\geq 1$.
            \item[-] $\varphi = \lnot s''$. Note that if $s'(b) \in W^{\uparrow i}_{\I,\C_s}(\emptyset)$, then $\lnot s''(b) \in W^{\uparrow i-1}_{\I,\C_s}(\emptyset)$, thus, by using the induction hypothesis, we find that there exists a $j$ such that $\lnot s''(b) \in \atoms_{\I}(\mu^j X_s.\varphi)$. Following the definition of $\atoms_\I$, this means there must be a $\nu X_{s''}.\psi \in \sub(\trpe(s))$ such that $b \in ||\nu X_{\Bar{s''}}.\psi||^\I_{V_j}$, where $V_j$ is such that $V_j(X_s) = ||\mu^{j-1}X_s.\varphi||^\I_V$. Considering the definition of $\trpe(s)$, we distinguish two options: (a) either we also find the subformula $\mu X_{s'}.\nu X_{\Bar{s''}}.\psi \in \sub(\trpe(s))$, or (b), only the subformula $\mu X_{s'}.X_{\Bar{s''}} \in \sub(\trpe(s))$. As (b) is in contradiction with Lemma \ref{lemma:nu_loops}, we only consider case (a). If $X_{s'} \in \sub(\psi)$, this means there is some subformula $\mu X_{s'}.\chi(X_{s'}) \in \sub(\trpe(s))$ for which we can use a similar proof strategy using the least-fixed point approximation.
            If $X_{s'} \not\in \sub(\psi)$, this means that $||\mu X_{s'}.\nu X_{\Bar{s''}}.\psi||^\I_{V_j} = ||\nu X_{\Bar{s''}}.\psi||^\I_{V_j}$, and thus $s'(b) \in \atoms_{\I}(\mu^j X_s.\varphi)$.
            \item[-] Other cases are treated similarly. 
        \end{itemize}
        \item[(ii)] 
        Let $U$ be the greatest unfounded set w.r.t. $S_{i-1}$, $\I$ and $\C_s$, we distinguish four cases: (a) $s'(b) \in U$ such that $s' \gets A \in \C$ is the unique constraint with head $s'$; (b) $s'(b) \in U$ such that $s' \gets \lnot s'' \in \C$ is the unique constraint with head $s'$; (c) $s'(b) \in U$ such that $s' \gets \exists r.s'' \in \C$ is the unique constraint with head $s'$ and $\{e \in \Delta^\I \mid (b,e) \in r^\I\} = \emptyset$; and (d) all $s'(b) \in U$ that do not fit in any of the previous categories.
        Note that cases (a) and (c) can be treated similarly to the cases $\varphi = A$, resp. $\varphi = \forall r.s''$ in part (i). Thus, we will only address cases (b) and (d) here.
        \begin{itemize}
            \item[(b)] $\varphi = \lnot s''$. Note that $\lceil \varphi \rceil^\I_{S_{i-1} \cup \lnot.U} = \{a \in \Delta^\I \mid s''(a) \not \in S_{i-1} \cup \lnot.U\}$. Thus, $b \not \in \lceil \varphi \rceil^\I_{S_{i-1} \cup \lnot.U}$ implies that $s''(b) \in S_{i-1}$. Now, we may use the induction hypothesis to conclude there exists some $j$ such that $s''(b) \in \atoms_\I(\trpe(s))$. The rest of the argument can be treated similar to the cases in (i).
            \item[(d)] Take some $s'(b) \in U$ and note that there must exist some $\nu X_{\Bar{s'}}.\psi \in \sub(\trpe(s))$. Now let $L = \{a \in \Delta^\I \mid s'(a) \in U\}$, we aim to show that $L \subseteq ||\nu X_{\Bar{s'}}.\psi||^\I_{V[X_{\Bar{s'}} \mapsto L]}$. We distinguish two cases: ($\alpha$) either $X_{\Bar{s'}} \in \sub(\psi)$, or ($\beta$) not. In the latter case, we may distinguish two settings again, either there exists some $\nu X_{\Bar{s''}}.\psi' \in \sub(\trpe(s))$ such that $\nu X_{\Bar{s'}}.\psi \in \sub(\psi')$ and $X_{\Bar{s''}} \in \sub(\psi)$, that is, $\nu X_{\Bar{s'}}.\psi$ appears within some ``loop'' of negated atoms, or not. The case of $\nu X_{\Bar{s'}}.\psi$ in the ``loop'' will follow from the way case $(\alpha)$ is considered, whereas the outside-of-the-loop-case will either fall directly in cases (a)-(c), or can be brought back to cases of the form (a)-(c) with techniques similar to the ones in (i). 
            
            That is, we solely focus on case $(\alpha)$ here; if $X_{\Bar{s'}} \in \sub(\psi)$, there exist one or more chains of axioms $C_i = \{s_{i_1}\gets \varphi_{i_1},\ldots,s_{i_n} \gets \varphi_{i_n}\}$ such that $s_{i_1} = s'$, each $\varphi_{i_j}$ is of one of the forms $s'' \land s''', s'' \lor s''', \exists r.s''$ or $\forall r.s''$, for each $1 \leq j \leq n-1$ we have $s_{i_{j+1}} \in \sub(\varphi_{i_j})$, and lastly, we find $s_{i_1} \in \sub(\varphi_{i_n})$. To avoid redundancy, we also assume $s' \not = s_{i_j}$ for each $1 < j \leq n$. Note that by combining the insights of Lemma \ref{lemma:nu_loops} with the definition of the translation function, we find that for each $1 <j \leq n$, there exists some $\nu X_{\Bar{s_{i_j}}}.\psi_{i_j} \in \sub(\psi)$ such that $\nu X_{\Bar{s_{i_j}}}.\psi_{i_j} \in \sub(\psi_{i_{j-1}})$.

            Now let $U'$ be the greatest subset of $U$ such that if $s(a) \in U'$, then there exists some chain of axioms $C_i$ such that $s = s_{i_j}$ for some $1 \leq j \leq n$, and moreover, for all $1 \leq j \leq n$, if $s_{i_j}(a) \in U'$ and $s_{i_j} \gets \varphi_{i_j} \in \C$, then
            \begin{itemize}
                \item if $\varphi_{i_j} \in \{s_{i_{j+1}} \land s', s' \land s_{i_{j+1}}, s_{i_{j+1}} \lor s', s' \lor s_{i_{j+1}}\}$, then $s_{i_{j+1}}(a) \in U'$; and
                \item if $\varphi_{i_j} \in \{\exists r.s_{i_{j+1}}, \forall r.s_{i_{j+1}}\}$, then there exists some $b \in \{b \in \Delta^\I \mid (a,b) \in r^\I\}$ such that $s_{i_{j+1}}(b) \in U'$,
            \end{itemize}
            where $s_{i_{n+1}}$ is used as a substitute name for $s_{i_1}$.
            
            For each chain of axioms $C_i$, let $L_{i_j} = \{a \in \Delta^\I \mid s_{i_j}(a) \in U'\}$. As $s_{i_1} = s'$ for each chain of axioms $C_i$, we find $L_{i_1} = L_{k_1}$ for each pair of axiom chains $C_i$ and $C_k$. We want to show that 
            \begin{align}\label{loop_greatest_unfounded_set}
                L_{i_j} \subseteq ||\nu X_{\Bar{s_{i_j}}}.\psi_{i_j}||^\I_{V[X_{\Bar{s_{i_1}}} \mapsto L_{i_1}]}
            \end{align}
            for each constraint $s_{i_j} \gets \varphi_{i_j}$ appearing in one of the axiom chains $C_i$, assuming that for each $s_{i_j} \gets \varphi$ with $\varphi \in \{s_{i_{j+1}} \lor s', s' \lor s_{i_{j+1}}\}$, $\{a \in \Delta^\I \mid s_{i_j}(a) \in U'\} \subseteq ||\nu X_{\Bar{s'}}.\psi'||^\I_V$, where $V(X_{\Bar{s_{k_1}}}) = L_{k_1}$ in case $s' = s_{k_l}$, that is, $s'$ appears in some other chain, or where $V$ is in the state it is in after approximating the previous layer in case $s'$ does not appear in any other chain.
            To see (\ref{loop_greatest_unfounded_set}) holds, note it is immediate that $L_{i_1} \subseteq ||X_{\Bar{s_{i_1}}}||^\I_{V[X_{\Bar{s_{i_1}}} \mapsto L_{i_1}]}$. 
            Now assume $L_{i_j} \subseteq ||\nu X_{\Bar{s_{i_j}}}.\psi_{i_j}||^\I_{V[X_{\Bar{s_{i_1}}} \mapsto L_{i_1}]}$, and let $s_{i_{j-1}}(b) \in L_{i_{j-1}}$. We distinguish four cases: conjunction, disjunction, and existential and universal quantification. To illustrate the main idea, we treat the conjunction case.
            \begin{itemize}
                \item Let $s_{i_{j-1}} \gets s_{i_{j}} \land s' \in C_i$ (or similarly $s_{i_{j-1}} \gets s' \land s_{i_{j}} \in C_i$). In this case $\psi_{i_{j-1}} = \nu X_{\Bar{s_{i_{j}}}}.\psi_{i_{j}} \lor \chi$, for some $\chi$ representing the $\nu$ formula for $s'$ not included in the discussed loop, and thus $||\nu X_{\Bar{s_{i_j}}}.\psi_{i_j}||^\I_{V[X_{\Bar{s_{i_1}}} \mapsto L_{i_1}]} \cup ||\chi||^\I_{V[X_{\Bar{s_{i_1}}} \mapsto L_{i_1}]}= ||\psi_{i_{j-1}}||^\I_{V[X_{\Bar{s_{i_1}}} \mapsto L_{i_1}]}$. As $X_{\Bar{s_{i_{j-1}}}} \not \in \sub(\psi_{i_{j-1}})$, we find $||\psi_{i_{j-1}}||^\I_{V[X_{\Bar{s_{i_1}}} \mapsto L_{i_1}]} = ||\nu X_{\Bar{s_{i_{j-1}}}}.\psi_{i_{j-1}}||^\I_{V[X_{\Bar{s_{i_1}}} \mapsto L_{i_1}]}$.
                Moreover, we find for each $a \in L_{i_{j-1}}$, we have $a \in L_{i_j}$, or $a \in ||\chi||^\I_{V[X_{\Bar{s_{i_1}}} \mapsto L_{i_1}]}$; suppose $a \in L_{i_{j-1}}$, then $a \not \in \mtrue{s_{i_j} \land s'}{\G}{S \cup \lnot.U'}$, which means $a\not \in \mtrue{s_{i_j}}{\G}{S \cup \lnot.U'}$ or $\mtrue{s_{i_j} \land s'}{\G}{S \cup \lnot.U'}$. In the first case, note that $\mtrue{s_{i_j}}{\G}{S \cup \lnot.U'} = \{c \in \Delta^\G \mid \lnot s_{i_j} \not \in S \cup \lnot.U'\}$, thus $s_{i_j}(a) \not \in U'$, which means that $a \in L_{i_j}$. In the second case, $a \not \in \mtrue{s'}{\G}{S \cup \lnot.U'}$, there are again two options: either $s'=s_{k_l}$, in which case a similar argument for the loop $k$ suffices to show $a \in L_{k_l} \subseteq ||\chi||^\I_{V[X_{\Bar{s_{i_1}}} \mapsto L_{i_1}]}$; or $s'$ does not appear in any loop. In particular, this means there are no atoms of the form $s'(b) \in U'$, by definition of $U'$. If we now look at $a \not \in \mtrue{s'}{\G}{S \cup \lnot.U'}$, this implies $s'(a) \in U'$, or $\lnot s'(a) \in S$. Thus, $\lnot s'(a) \in S$. So, we are back in a non-loop case and can use the reasoning of before to conclude: $a \in ||\chi||^\I_{V[X_{\Bar{s_{i_1}}} \mapsto L_{i_1}]}$. Combining all subsumptions and equalities leads to the conclusion that $L_{i_{j-1}} \subseteq ||\nu X_{\Bar{s_{i_{j-1}}}}.\psi_{i_{j-1}}||^\I_{V[X_{\Bar{s_{i_1}}} \mapsto L_{i_1}]}$, as required.
            \end{itemize}
        \end{itemize}
        
    \end{itemize}

\end{proof}

\begin{pr}\label{prop:mu-calc_to_well-founded}
    For each constraint set $\C$, each interpretation $\I$ and each natural number $i$, such that $\mu X_s.\varphi = \trpe(s)$ if $s'(b) \in \atoms_{\I}(\mu^i X_s.\varphi)$, then there exists $j$ such that $s'(b) \in W^{\uparrow j}_{\I,\C}(\emptyset)$.
\end{pr}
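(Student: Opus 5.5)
We prove the statement by induction on $i$, the index of the outer least fixed point approximation $\mu^i X_s.\varphi$. We show the stronger claim that both positive and negative literals in $\atoms_\I(\mu^i X_s.\varphi)$ belong to $\wfgc$. Since $\wfgc = \bigcup_j W^{\uparrow j}_{\I,\C}(\emptyset)$ by the countability assumption, this gives the required index $j$ for each literal. The negative literals have to be included because the positive derivations pass through $\trp(\lnot s'') = \trm(s'')$, and the negative ones pass back through $\trm(\lnot s'') = \tr^+_{\pos(S),\C}(s'')$.

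The base case $i=0$ is trivial for the first component of $\atoms_\I$, since $||\mu^0 X_s.\varphi||^\I_V=\emptyset$. For the inner subformulas we argue by induction on the structure of the formula, taken under a valuation $V$ whose bound variables are already sound. Soundness of a valuation means two things: $a\in V(X_{s'})$ implies $s'(a)\in\wfgc$, and $a\in V(X_{\Bar{s'}})$ implies $\lnot s'(a)\in\wfgc$.

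The central auxiliary claim states that, under a sound valuation $V$ and for each shape expression $\psi$ in normal form:
\begin{itemize}
\item $a\in||\trp(\psi)||^\I_V$ implies $a\in\ptrue{\psi}{\I}{\wfgc}$;
\item $a\in||\trm(\psi)||^\I_V$ implies $a\notin\mtrue{\psi}{\I}{\wfgc}$.
\end{itemize}
The Boolean and modal cases follow from the clauses in Figure~\ref{fig:expr-eval}, using the duality between $\ptrue{\cdot}{}{}$ and $\mtrue{\cdot}{}{}$. The case $\lnot s''$ swaps the two parts. Because $\wfgc$ is a fixed point of $W_{\I,\C}$, the first implication yields $s'(a)\in T_{\I,\C}(\wfgc)\subseteq\wfgc$ whenever $\psi=\C(s')$.

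For a $\mu$-subformula $\mu X_{s'}.\chi$ we use the approximation proposition and induct over $\mu^k X_{s'}.\chi$. Each step preserves soundness of the valuation by the auxiliary claim together with $T_{\I,\C}(\wfgc)\subseteq\wfgc$.

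The $\nu$-case is the crux and the main obstacle. Let $L=||\nu X_{\Bar{s'}}.\chi||^\I_V$. We would try to show that the set $U$ consists of two parts:
\begin{itemize}
\item the atoms $s'(a)$ with $a\in L$;
\item the analogous atoms for every nested $\nu X_{\Bar{t}}$-subformula whose variable is in scope, i.e.\ the whole ``loop'' of negated shapes, as in case (d) of Proposition~\ref{prop:well-founded_to_mu-calc}.
\end{itemize}
We aim to prove that $U$ is unfounded with respect to $\wfgc$, $\I$ and $\C$. Since $\wfgc$ is a fixed point, $\lnot.U_{\I,\C}(\wfgc)\subseteq\wfgc$, and because $U\subseteq U_{\I,\C}(\wfgc)$ we obtain $\lnot s'(a)\in\wfgc$. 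To check unfoundedness, we unfold $L = ||\chi||^\I_{V[X_{\Bar{s'}}\mapsto L]}$ and apply the second implication of the auxiliary claim with respect to the assignment $\wfgc\cup\lnot.U$ rather than $\wfgc$. This is legitimate because $\mtrue{\cdot}{}{}$ is antimonotone in the assignment, and because the variables $X_{\Bar t}$ are interpreted exactly by the projections of $U$.

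Two points require Lemma~\ref{lemma:nu_loops}. It guarantees that no $\mu$-variable is rebound inside a $\nu$-scope, and the $\pos(S)$ reset ensures that any positive subformulas occurring inside $\chi$ are closed or refer only to outer $\mu$-variables. Such subformulas are sound by the outer induction hypothesis, so the assignment change does not affect them. Getting this simultaneous unfoundedness argument right for mutually nested $\nu$-variables is the delicate step. We would handle it by defining $U$ as the union over all $\nu$-variables in scope of their current values, and then verifying the condition constraint by constraint using the normal form.

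The outer step from $i$ to $i+1$ is an instance of the same argument, with $V(X_s)=||\mu^i X_s.\varphi||^\I_V$ sound by the induction hypothesis.
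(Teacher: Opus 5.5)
Your proof is correct in substance and has the same architecture as the paper's. Both argue by structural induction on the translated formula, treat $\mu$-subformulas through their approximations, and treat $\nu$-subformulas by exhibiting an unfounded set built from the values of the $\nu$-variables of the current negative block. Both also use Lemma~\ref{lemma:nu_loops} and the $\pos(S)$ reset to guarantee that positive subformulas inside a $\nu$-scope do not mention those variables.

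The genuine difference is the reference point. The paper's claim places a whole extension $||\sigma X_t.\varphi||^\I_V$ inside $\ptrue{t}{\I}{S_{j'}}$ for one stage $j'$. In the $\nu$-case it checks unfoundedness with respect to that particular $S_{j'}$, after taking a maximum over indices. You check everything against the fixed point $\wfgc$ itself, using only $T_{\I,\C}(\wfgc)\subseteq\wfgc$, $\lnot.U_{\I,\C}(\wfgc)\subseteq\wfgc$, and the antimonotonicity of $\mtrue{\cdot}{\I}{\cdot}$ to pass to $\wfgc\cup\lnot.U$. This removes all index bookkeeping and is more robust. On an infinite graph a single $j'$ bounding a whole extension need not exist, whereas your element-wise invariant holds on arbitrary graphs and is exactly what Theorem~\ref{theorem:translationiscorrect} needs.

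The price is that the index $j$ must be extracted at the end, and there your justification fails. Countability of the domain does not give $\wfgc=\bigcup_{j<\omega}W^{\uparrow j}_{\I,\C}(\emptyset)$. Take $s\gets\forall r.t$ and $t\gets A\lor\exists q.t$ (normalised). Let $a$ be a node with infinitely many $r$-successors $b_1,b_2,\ldots$, where $b_n$ has a $q$-path of length $n$ to an $A$-node. Then $a\in||\mu^1X_s.\varphi||^\I_V$, so $s(a)\in\atoms_\I(\mu^1X_s.\varphi)$, but $s(a)$ first appears in $W^{\uparrow\omega+1}_{\I,\C}(\emptyset)$.

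So a natural-number $j$ is guaranteed only for finite graphs, where the iteration stabilises after finitely many steps. In general $j$ must be read as an ordinal, which your argument supports directly, since $\wfgc$ is reached by transfinite iteration of the monotone $W_{\I,\C}$.

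The same defect affects your appeal to the $\omega$-approximation proposition in the $\mu$-case; for example, $\mu X.[r]X$ on a graph of this kind needs $\omega+1$ steps. You can avoid approximations altogether, in keeping with your fixed-point style. For $P=\{a\mid s'(a)\in\wfgc\}$, your auxiliary claim together with $T_{\I,\C}(\wfgc)\subseteq\wfgc$ gives $||\chi||^\I_{V[X_{s'}\mapsto P]}\subseteq P$. Hence $||\mu X_{s'}.\chi||^\I_V\subseteq P$ by Knaster--Tarski.
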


\begin{proof}
As we are considering a least fixed point, it suffices to prove the following claim.

    \begin{cl}
        Given some $\sigma X_t.\varphi$ and a valuation function $V$, and assume that for each variable $X_{t'} \in \sub(\varphi)$ that is free in $\sigma X_t.\varphi$, we have, if $b \in V(X_{t'})$, then there exists a $j$ such that $b \in \lfloor t' \rfloor^\I_{S_j}$. 
        In that case, there exists a $j'$ such that $||\sigma X_t.\varphi||^\I_V \subseteq \lfloor t \rfloor^\I_{S_{j'}}$.
    \end{cl}

    We prove the above claim by induction on the construction of $\varphi$. We distinguish three cases: (i) $\sigma X.\varphi$ such that $X \not \in \sub(\varphi)$, and in case $X \in \sub(\varphi)$: (ii) $\sigma = \mu$ and (iii) $\sigma = \nu$. 

    \begin{enumerate}
        \item[(i)] \begin{itemize}
        \item[-] $\varphi = A$. Clearly $||\sigma X_t.A||^\I_V = ||A||^\I_V = A^\I = \lfloor A \rfloor^\I_{S_j}$, as $\sigma X_t.A$ can only be constructed in case $t \gets A \in \C$, we find that if $b \in ||\sigma X_t.A||^\I_V$, then $b \in \lfloor t \rfloor^\I_{S_{j+1}}$. 
        \item[-] $\sigma X_t.\varphi = \mu X_t.\nu X_{\Bar{s}}.\psi$. As $X_t \not \in \sub(\psi)$, we find $||\mu X_t.\nu X_{\Bar{s}}.\psi||^\I_V = ||\nu X_{\Bar{s}}.\psi||^\I_V$, then by IH, there exists some $j$ such that $||\nu X_{\Bar{s}}.\psi||^\I_V \subseteq \lfloor \lnot s \rfloor^\I_{S_j}$. As $\mu X_t.\nu X_{\Bar{s}}.\psi$ will only be constructed given some $t \gets \lnot s \in \C$, we find that if $b \in ||\mu X_t.\nu X_{\Bar{s}}.\psi||^\I_V$, then $b \in \lfloor t \rfloor^\I_{S_{j+1}}$.
        \item[-] Other cases are treated similarly.
    \end{itemize}
    \item[(ii)]
    \begin{itemize}
        \item[-] $\mu X_t.\varphi(X_t) = \mu X_t.\sigma X_s.\psi(X_t)$. In this case, we find $||\mu X_t.\sigma X_s.\psi(X_t)||^\I_V$ can be determined by using the approximation function $||\mu^{0} X_t.\sigma X_s.\psi(X_t)||^\I_V = \emptyset$ and $||\mu^{i+1} X_t.\sigma X_s.\psi(X_t)||^\I_V = ||\sigma X_s.\psi(X_t)||^\I_{V[X_t \mapsto ||\mu^{i} X_t.\sigma X_s.\psi(X_t)||^\I_V]}.$ Note that for the $i$-th approximated case, we can derive that there exists some $j$ such that $||\mu^{i} X_t.\sigma X_s.\psi(X_t)||^\I_V \subseteq \lfloor t \rfloor^\I_{S_j}$ by a repetitive usage of the induction hypothesis. As we are considering a least fixed point, this eventually suffices.
        \item[-] Other cases are treated similarly.
    \end{itemize}
    \item[(iii)] $\nu X_t.\varphi(X_t)$. Following Lemma \ref{lemma:nu_loops}, all $\sigma X_{t'}.\psi \in \sub(\varphi)$ such that $X_t \in \sub(\psi)$ are such that $\sigma = \nu$. We collect all relevant shape names in the following set: let $W = \{s \in \NS \mid \nu X_{\Bar{s}}.\psi \in \sub(\varphi), X_t \in \sub(\psi) \}$. Note that for all $\nu X_{\Bar{s}}.\psi \in \sub(\varphi)$ such that $X_t \not \in \sub(\psi)$ the induction hypothesis may be applied to derive that there exists some $j$ such that $||\nu X_{\Bar{s}}.\psi ||^\I_V \subseteq \lfloor \lnot s \rfloor^\I_{S_j}$. Let $j'$ be the maximum of all these $j$. Now let 
        $U = \{s(a) \mid a \in ||\nu X_{\Bar{s}}.\psi||^\I_V, \nu X_{\Bar{s}}.\psi \in \sub(\nu X_t.\varphi), s \in W\}.$
        
        In what follows, we will show $U$ is an unfounded set w.r.t.\ $S_{j'}$, $\I$ and $\C$. 
        To this end, we have to show that for all $s(a) \in U$ we find $a \not\in \lceil \C(s) \rceil^\I_{S_{j'} \cup \lnot.U}$. 
        Note that if $s \gets \varphi' \in \C$ such that $s \in W$, then $\varphi'$ is of the form $\exists r.s'$, $\forall r.s'$, $s \gets s' \land s''$ or $s \gets s' \lor s''$. Thus, we may consider $\C(s) \in \{\exists r.s',\forall r.s', s' \land s'', s' \lor s''\}$:
    \begin{itemize}
        \item[-] $\C(s) = \forall r.s'$. In this case, the subformula looks like $\nu X_{\Bar{s}}.\langle r \rangle \nu X_{\Bar{s'}}.\psi'$, or $\nu X_{\Bar{s}}.\langle r \rangle X_{s'}$. Note that in the latter case, $s' = t$. As we assumed $a \in ||\nu X_{\Bar{s}}.\psi||^\I_V$, we find there must be a $b \in \Delta^\I$ such that $(a,b) \in r^\I$ and $b \in ||\nu X_{\Bar{s'}}.\psi''||^\I_V$, both if $\psi'' = \psi'$ and if $\psi'' = \varphi$. Thus, $s'(b) \in U$, which means $s'(b) \not \in \lnot.U$ and moreover, we find $s'(b) \not \in S_{j'}$ (I.H. of the previous proposition), thus $b \not\in \lceil s' \rceil^\I_{S_{j'}\cup \lnot.U}$. As $\lceil \forall r.s' \rceil^\I_{S_{j'}} = \{a' \in \Delta^\I \mid \forall a' \in \Delta^\I : (a,a')\in r^\I \text{ implies } a'\in\lceil s' \rceil^\I_{S_{j'}\cup \lnot.U}\}$, we must conclude that, because of the witness $b$, $a \not \in \lceil \C(s) \rceil^\I_{S_{j'} \cup \lnot.U}$.
        \item[-] $\C(s) = s' \lor s''$. In this case, the subformula looks like $\nu X_{\Bar{s}}.\psi' \land \psi''$, for $\psi'$ and $\psi''$ either a variable $Y$ or a subformula of the form $\nu Y.\chi$. There are two possible options: either $\{s',s''\} \subseteq W$, in which case a similar argument as before suffices. However, if, without loss of generality, $s' \not \in X$, the subformula looks like $\nu X_{\Bar{s}}.(\nu X_{\Bar{s'}}.\psi') \land \psi''$
        we are back in the case of $\nu X_{\Bar{s'}}.\psi \in \sub(\varphi)$ such that $X_t \not \in \sub(\psi)$, and thus $||\nu X_{\Bar{s'}}.\psi||^\I_V \subseteq \lfloor \lnot s' \rfloor^\I_{S_{j'}}$, as concluded before. As we assumed $a \in ||\nu X_{\Bar{s}}.\psi||^\I_V$, this means that also $a \in ||\nu X_{\Bar{s'}}.\psi||^\I_V$, and thus $a \in \lfloor \lnot s' \rfloor^\I_{S_{j'}}$. This corresponds to $\lnot s'(a) \in S_{j'}$, so clearly also $a \not \in \lceil s' \rceil^\I_{S_{j'} \cup \lnot.U}$. To argue why $a \not \in \lceil s'' \rceil^\I_{S_{j'} \cup \lnot.U}$, a similar argument as above applies, as $s'' \in W$. Thus, we find $a \not \in \lceil \C(s) \rceil^\I_{S_{j'} \cup \lnot.U}$.
        \item[-] Other cases are treated similarly. 
    \end{itemize}
    \end{enumerate}
\end{proof}
\end{toappendix}

We can now prove the correctness of our translation. 

\begin{theoremrep}\label{theorem:translationiscorrect}
    For each shapes pair $(\C,s)$, each data graph $\I$, and each node $c$ of $\I$, we have $$\I \text{ validates } (\C,s(c)) \quad\text{iff}\quad \I,c \models \trpe(s).$$
\end{theoremrep}

The proof, given in the appendix, uses the approximation semantics of the $\mu$-calculus, and the iterative computation of the well-founded semantics via the immediate consequence operator. 
In a nutshell, we show that for every $a$ and $i$:
\begin{enumerate}
    \item if $s(a)$ is in the $i$-th iteration of $W_{\I,\C_s}(\emptyset)$, then there exists a $j$ such that $a$ is in the extension of the $j$-th approximation of $\mu X_s.\varphi = \trpe(s)$, and conversely 
    \item if $a$ is in the extension of the $j$-th approximation of $\mu X_s.\varphi = \trpe(s)$, then for some $j$ we have that 
    $s(a)$ is in the $i$-th iteration of $W_{\I,\C_s}(\emptyset)$.
\end{enumerate}
Each of these claims is shown by induction on the shape of the formula, which means we also treat atoms of the form $s'(a)$ and $\lnot s'(a)$ in the $i$-th iteration of $W_{\I,\C_s}(\emptyset)$ and compare this with the extension of the subformulas $\mu X_{s'}.\psi$, resp. $\nu X_{\Bar{s'}}.\psi$ in the $j$-approximation of $\mu X_s.\varphi = \trpe(s)$.
Note that the iterative computation of $W_{\I,\C}$ and the interactive approximations of $\trpe(s)$ operate quite differently and, in particular, they do not derive facts in the same number of iterations, hence the mismatch between the indices $i$ and $j$.

This theorem gives us desired reduction from shape satisfiability to satisfiability in the full hybrid $\mu$-calculus.

\begin{corollary}\label{cor:shapeSat}
    Let $\C_1$ and $\C_2$ be sets of shape constraints and $s_1$ and $s_2$
    shape names. Then 
    \begin{itemize}
        \item $s$ is (finitely) satisfiable w.r.t.\ $\C_1$ under the well-founded semantics iff   $\mathit{tr}^+_{\emptyset,\C_1}(s)$ is (finitely) satisfiable, and
        \item $(\C_1,s_1)$ (finitely) implies $(\C_2,s_2)$ under the well-founded semantics iff   $\mathit{tr}^+_{\emptyset,\C_1}(s) \land \lnot \mathit{tr}^+_{\emptyset,\C_2}$ is not (finitely) satisfiable.
    \end{itemize}
\end{corollary}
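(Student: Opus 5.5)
The corollary follows from Theorem~\ref{theorem:translationiscorrect}, which gives, for every graph $\I$ and node $c$,
\[
\I \text{ validates } (\C,s(c)) \iff \I,c\models \mathit{tr}^+_{\emptyset,\C}(s).
\]
The plan is to unfold the definitions of shape satisfiability and shape implication with respect to constraints and substitute this equivalence pointwise. For the \emph{finite} variants the same argument goes through verbatim, because the theorem quantifies over all data graphs and data graphs are finite by definition. For unrestricted satisfiability I read ``graph'' as a possibly infinite structure, which is the reading the $\mu$-calculus side also needs. The proof of the theorem, via approximations $\mu^\omega$ and iterations of $W_{\I,\C}$, does not use finiteness beyond countability, so I will remark that it applies to countable infinite structures. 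By the usual Löwenheim–Skolem-style argument for the $\mu$-calculus, countable structures suffice.

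For the first item, take the ``only if'' direction. If $s$ is satisfiable w.r.t.\ $\C_1$, then some $\I$ has a node $c$ with $\I$ satisfying $(\C_1,s(c))$. This means $\I$ validates the target under $\mathit{WF}_{\I,\C_1}$, the unique assignment, so $s(c)\in \mathit{WF}_{\I,\C_1}$. By the theorem, $\I,c\models \mathit{tr}^+_{\emptyset,\C_1}(s)$. For the ``if'' direction, take a model $\I,c$ of the formula. If $\I$ is not compatible with the document, I will add the finitely many individuals occurring in $\C_1$ as isolated nodes, or observe that nominals in the formula already force them to be present. The theorem then gives validation. Compatibility is the one bookkeeping point to handle carefully: adding isolated nodes may change the evaluation of $\forall$ or $\exists$ only at those nodes, and the formula's truth at $c$ must be preserved. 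I expect it is cleaner to restrict attention to structures interpreting all nominals of the formula, as is standard for the hybrid $\mu$-calculus.

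For the second item, $(\C_1,s_1)$ implies $(\C_2,s_2)$ iff there is no graph $\I$ and node $c$ such that $c$ satisfies $(\C_1,s_1)$ but not $(\C_2,s_2)$. By the theorem, applied to both constraint sets on the same $\I$ and $c$, such a pair is exactly a pointed model of $\mathit{tr}^+_{\emptyset,\C_1}(s_1)\land\lnot\mathit{tr}^+_{\emptyset,\C_2}(s_2)$. The statement's $s$ should be read as $s_1$, and the missing argument as $s_2$.

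The main obstacle is that negation is only allowed on atoms in the stated syntax. I will therefore define $\lnot\Phi$ as the usual dual (push negation inward, swap $\mu/\nu$, $\langle r\rangle/[r]$, $\land/\lor$, and rename bound variables), and note that $\|\lnot\Phi\|=\Delta\setminus\|\Phi\|$ by the standard duality of fixed points. With that in place, the chain of equivalences is immediate, and compatibility with both documents is handled as in the first item.
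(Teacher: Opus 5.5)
Your proposal is correct and is essentially the paper's argument: the paper's proof just instantiates the two directions of the translation (Propositions~\ref{prop:well-founded_to_mu-calc} and~\ref{prop:mu-calc_to_well-founded}) at the stages where both fixpoint computations stabilise, i.e.\ it applies Theorem~\ref{theorem:translationiscorrect} pointwise exactly as you do, and your reading of $\lnot$ as the dual formula and of $s$ as $s_1$ makes explicit what the paper leaves implicit. One caveat for the unrestricted case is that countability alone does not make the $\omega$-th approximations exact (in a countable graph, $\mu X.[r]X$ holds at a node whose $r$-successors start $r$-chains of every finite length, yet that node enters only at stage $\omega+1$, and the same happens for $s \gets \forall r.s$ under the operator $W$), so the inductions behind the theorem must run over ordinals --- a routine change that also makes your L\"owenheim--Skolem step unnecessary, and one that the paper's own appeal to a stabilising stage $i\in\mathbb{N}$ glosses over as well.
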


\begin{proof}
    First, note that for each $\I$ there exists an $i \in \mathbb{N}$ such that the least fixed point coincides with its approximation: $||\mu X. \varphi||^\I_V = ||\mu^i X. \varphi||^\I_V$. Similarly, there exists an $i' \in \mathbb{N}$ such that $W^{\uparrow i'}_{\I,\C}(\emptyset) = W^{\uparrow j}_{\I,\C}(\emptyset)$ for all $j > i'$.
    Propositions \ref{prop:well-founded_to_mu-calc} and \ref{prop:mu-calc_to_well-founded} apply specifically to these $i$ and $i'$, which suffices to conclude the above. 
\end{proof}

\paragraph{Implication and Satisfiability of Documents.}

We also reduce satisfiability and containment of SHACL documents to satisfiability in the full hybrid $\mu$-calculus, as presented below.  
Here, we show how to deal with the (finite) document implication problem, and recall that in Proposition \ref{pr:docsat-docimp} we described how (finite) document satisfiability can be reduced to this case. 

That is, for any document $(\C,\T)$, let   $\Theta_{\C,\T}$ be the formula defined as a conjunction of the following formulas:
\begin{align*}
    &\neg a \lor \trpe(s) &&\text{for all }(a,s)\in \T\\
    &\neg A \lor \trpe(s) &&\text{for all }(A,s)\in \T\\
    &\neg\langle r \rangle\top \lor \trpe(s) &&\text{for all }(r,s)\in \T.
\end{align*}

When evaluated on a graph node, $\Theta_{\C,\T}$ tells us whether the node satisfies all required target shapes. To propagate $\Theta_{\C,\T}$ to all (relevant) nodes of the graph, we use a greatest fixed point formula; for a document $(\C,\T)$, and a role name $p$, let 
\[\Lambda_{\C,\T,p}=  \nu X.(\Theta_{\C,\T} \land \bigwedge_{r\in R \cup \{p\}}[ r^-
].X \land \bigwedge_{r\in R \cup \{p\}}[ r
].X ) , \]
where $R$ is the set of role names that  appear in $(\C,\T)$.  

\begin{theorem}\label{thm:docImplicReduc}
For each pair of SHACL documents $(\C,\T)$ and $(\C',\T')$, a SHACL document
$(\C,\T)$ (finitely) implies $(\C',\T')$  iff 
$$
\bigwedge_{a\in I}\langle p \rangle (a \land \Lambda_{\C,\T,p}) \land \langle p\rangle \neg \Lambda_{\C',\T',p}
$$ 
is not (finitely) satisfiable, where $I$ is the set of individuals that appear in $(\C,\T)$ or $(\C',\T')$, and $p$ a fresh role.
\end{theorem}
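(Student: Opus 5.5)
We argue both directions by contraposition, with Theorem~\ref{theorem:translationiscorrect} as the key tool. It says that for every graph $\I$ and node $c$, $\I,c \models \trpe(s)$ iff $c$ satisfies $s$ \wrt $\C$ under the well-founded semantics. Hence $\I,c\models\Theta_{\C,\T}$ iff $c$ meets every target requirement of $\T$ that applies to $c$. Here $(a,s)$ applies when $c=a$, $(A,s)$ applies when $c\in A^\I$, and $(r,s)$ applies when $c$ has an $r$-successor. Then $\I,c\models\Lambda_{\C,\T,p}$ holds iff every node reachable from $c$ along $R\cup\{p\}$-edges in either direction satisfies $\Theta_{\C,\T}$. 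This is the usual characterisation of the greatest fixed point as the largest set closed under the box-successors, which we can read off the $\nu^\omega$ approximation.

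\textbf{($\Leftarrow$, a counterexample graph yields a model).} Suppose $\G$ satisfies $(\C,\T)$ but not $(\C',\T')$, and $\G$ is compatible with both documents. Build $\G^*$ by adding a fresh node $o$ and edges $p(o,b)$ for every $b\in\Delta^\G$ (with $p$ fresh). We check three things.
\begin{itemize}
\item $\G^*,o\models\langle p\rangle(a\land\Lambda_{\C,\T,p})$ for each $a\in I$. The $p$-edges make $o$ and all of $\G$ one connected component, so we must check that $\Theta_{\C,\T}$ holds at every node, including $o$. It holds at $o$ trivially, since $o$ is not an individual of $\T$, carries no concept names, and has only $p$-successors, which no target mentions.
\item The shape semantics on $\G^*$ agrees with that on $\G$. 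Constraints never mention $p$, and the well-founded model of a node depends only on its edges labelled by roles of $R$, so Theorem~\ref{theorem:translationiscorrect} transfers.
\item $\G^*,o\models\langle p\rangle\neg\Lambda_{\C',\T',p}$. Some node of $\G$ violates a target of $\T'$, so $\Lambda_{\C',\T',p}$ fails at any $p$-successor of $o$.
\end{itemize}

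\textbf{($\Rightarrow$, a model yields a counterexample graph).} Suppose $\I,o$ satisfies the formula. Take $\G$ to be the restriction of $\I$ to the connected component, via $R\cup\{p\}$-edges in both directions, of the $p$-successor $e$ of $o$ that witnesses $a\land\Lambda_{\C,\T,p}$. Every individual in $I$ is $p$-reachable from $o$, so $o$ lies in this same component. Hence all the relevant witnesses, including the one for $\neg\Lambda_{\C',\T',p}$, live in this one component, and $\G$ is compatible with both documents. Validation of shapes is local to the component (again by the translation theorem, since $\mu$-calculus formulas over $R$ are invariant under taking the generated subgraph). Therefore the $\Lambda_{\C,\T,p}$ conjunct gives that $\G$ satisfies $(\C,\T)$, and the failure of $\Lambda_{\C',\T',p}$ gives a node of $\G$ violating some target of $\T'$. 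The finite case is identical, since both constructions preserve finiteness.

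The main obstacle is locality. Theorem~\ref{theorem:translationiscorrect} is stated for whole graphs, so we must argue that adding the $p$-edges, or restricting to a component, changes neither the well-founded model on the relevant nodes nor the truth of the targets. This also covers the case of nominals, which are interpreted globally. We would handle it by noting that each formula $\trpe(s)$ mentions only roles in $R$. Their semantics is therefore preserved under generated subgraphs and under the addition of edges with fresh labels. That in turn requires all nominals to lie in the component, which is exactly what the conjunct $\bigwedge_{a\in I}\langle p\rangle a$ guarantees.
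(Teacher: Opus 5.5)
Your ($\Leftarrow$) direction, which adds a fresh root $o$ with $p$-edges to every node of a counterexample graph, is sound. The gap is in ($\Rightarrow$), at the sentence claiming that the witness for $\neg\Lambda_{\C',\T',p}$ lies in the $(R\cup\{p\})$-component of $e$.

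By definition, $\Lambda_{\C',\T',p}$ propagates along $R'\cup\{p\}$, where $R'$ is the set of roles of $(\C',\T')$. Its violating node may therefore be reachable only through roles in $R'\setminus R$. Even a violating node inside the component can change status when you cut its $R'$-edges, because the generated-subgraph invariance you invoke for $\mathit{tr}^+_{\emptyset,\C'}$ needs closure under $R'$, not $R$. You also tacitly assume $I\neq\emptyset$ when you pick $e$.

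Neither gap can be patched, because the statement, read literally, fails in exactly these cases.
\begin{itemize}
\item Take $\C=\{s\gets B,\ u\gets c\}$, $\T=\{(A,s)\}$, $\C'=\C\cup\{t\gets\exists q.s\}$ and $\T'=\T$. The targets coincide and $t$ does not influence the well-founded value of $s$, so $(\C,\T)$ implies $(\C',\T')$. Yet the finite graph $\{p(o,c),q(c,d),A(d)\}$ satisfies the formula at $o$. The $\{p\}$-component $\{o,c\}$ contains no $A$-node, so the first conjunct holds, while $d$, reached from $c$ via $q$, violates $\neg A\lor B$.
\item For $\C=\C'=\{s\gets B\}$ and $\T=\T'=\{(A,s)\}$ we get $I=\emptyset$. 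The formula then collapses to $\langle p\rangle\neg\Lambda_{\C,\T,p}$, which is satisfiable although every document implies itself. This is not a corner case: it happens for every pair of documents using only class and role targets and no nominals.
\end{itemize}

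What your argument does prove is a corrected version. Let both $\Lambda$'s propagate along the common set $R\cup R'\cup\{p\}$, and assert $\Lambda_{\C,\T,p}$ at the root rather than only at the nominals, i.e.\ test $\Lambda_{\C,\T,p}\land\bigwedge_{a\in I}\langle p\rangle a\land\langle p\rangle\neg\Lambda_{\C',\T',p}$. For this formula your construction of $\G^*$ works unchanged. For the converse, restrict to the $(R\cup R'\cup\{p\})$-component of $o$. This component is closed under every role occurring in either translation, contains all nominals and a node violating $\Theta_{\C',\T'}$, and all of its nodes satisfy $\Theta_{\C,\T}$. Your locality argument via Theorem~\ref{theorem:translationiscorrect} then goes through verbatim, including in the finite case. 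You should state explicitly that this reformulation is what you are proving.
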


\section{Complexity Results}

Since the full hybrid $\mu$-calculus is decidable, we obtain from Corollary~\ref{cor:shapeSat}  and Theorem~\ref{thm:docImplicReduc} that the static analysis problems of interest are also decidable. 
This is particularly exciting in light of Theorem \ref{thm:undecSupp}. 
However, we also want to obtain tight complexity results. 
The complexity of the $\mu$-calculus and its variants is well understood, but, regrettably, our translation may incur an exponential blow-up due to the possible repetition of subformulas associated with each shape name.
Fortunately, this blow-up affects only the representation of the translation as a single formula, and not the complexity of the problem. 

In this section, we obtain complexity results for the static analysis problems under the well-founded semantics.
The key insight is that the automata-based technique for the fully hybrid $\mu$-calculus by \citeauthor{DBLP:conf/cade/SattlerV01} (\citeyear{DBLP:conf/cade/SattlerV01}) can be adapted to decide satisfiability of the formulas that result from our translation in time that is single exponential in the size of the original input, and not of the translated formulas.
In a nutshell, 
instead of using all subformulas as states, we reuse the translation functions as presented in Figure \ref{fig:translationfunctions} as the transition function, avoiding the problem of exponentially many states in the constructed automaton. 
We provide the full construction for the sake of completeness, but remark that the adaptation is relatively straightforward, as will be clear to the readers familiar with automata decision procedures for the $\mu$-calculus.

\paragraph{Two-Way Alternating Parity Tree Automata.}
We recall 2ATA, which can move up and down possibly infinite trees with transitions that may move to combinations of states and directions.
For $k \geq 1$, let $(\{1,\ldots,k\}^*,\ell)$ be a $k$-ary $\Sigma$-\textit{labelled tree} if $\ell$ maps each node $x \in \{1,\ldots,k\}^*$ to its label $\ell(x) \in \Sigma$. For $1 \leq i \leq k$, we let $x \cdot i$ be the $i$th \textit{child} of $x$. With $x \cdot 0$ we mean $x$ itself, and $x \cdot -1$ indicates the \textit{parent} of $x$; the node $y$ such that $x = y \cdot i$ for some $1 \leq i \leq k$.
Let $\mathcal{B}(I)$ be the set of \textit{positive boolean formulas} over $I$, built inductively by applying $\land$ and $\lor$ starting from $\bot$, $\top$ and the elements of $I$. For a set $J \subseteq I$ and a formula $\varphi \in \mathcal{B}(I)$, we say that $J$ satisfies $\varphi$ iff assigning $\top$ to the elements in $J$ and $\bot$ to those in $I \setminus J$, makes $\varphi$ true. For a positive integer $k$, let $[k] = \{-1, 0,1, \dots,k \}$. 
A \textit{two-way alternating parity tree automaton} (2ATA) running over infinite tree encodings with branching degree $k$, is a tuple $\automaton = \langle \Sigma, Q, \delta, q_0, p \rangle$, where $\Sigma$ is the input alphabet, $Q$ is a finite set of states, $\delta: Q \times \Sigma \rightarrow \mathcal{B}([k]\times Q)$ is the transition function, $q_0 \in Q$ is the initial state and $p: Q \rightarrow \mathbb{N}$ is the priority function. 

A run of a 2ATA $\automaton$ over a $k$-ary tree $(\{1,\ldots,k\}^*,\ell)$, is a $(\{1,\ldots,k\}^* \times Q)$-labelled tree $\T_r=(T_r, \ell_r)$ with $V_r \subseteq \{1,\ldots,k\}^*$, satisfying: 
\begin{enumerate}
    \item $\epsilon \in V_r$ and $\ell_r(\epsilon) = (\epsilon,q_0)$. 
    \item Let $y \in V_r$ with $\ell_r(y) = (x,q)$ and $\delta(q,\ell(x)) = \varphi$. Then there is a (possibly empty) set $S = \{ (c_1,q_1), \dots, (c_n,q_n) \} \subseteq [k] \times Q$ such that: 
    \begin{itemize}
        \item $S$ satisfies $\varphi$, and 
        \item for all $1 \leq i \leq k$, we have that $y \cdot i \in V_r$ and $\ell_r(y  \cdot i) = (x \cdot c_i, q_i)$.
    \end{itemize}
\end{enumerate}
Given an infinite path $p \subseteq V_r$, let $\mathit{inf}(p) \subseteq Q$ be the set of states that appear infinitely often in $p$. A run $\T_r$ of a 2ATA $\automaton = \langle \Sigma, Q, \delta, q_0, p \rangle$ is \emph{accepting} if for every infinite path $p \subseteq V_r$, the state with the highest priority that occurs in $\mathit{inf}(p)$ is even.

Following \citeauthor{DBLP:conf/cade/SattlerV01} (\citeyear{DBLP:conf/cade/SattlerV01}), we define \emph{guesses}, which fix the entire configuration of the individuals mentioned in the constraint set. We can enumerate all such configurations and build an automaton for each of them.

\begin{definition}
    A \emph{guess} $\guess = (G,f,C)$ for a constraint set $\C$ containing the nominals $a_1,\ldots,a_l$ consists of a \emph{guess list} $G = (\gamma_1,\ldots,\gamma_l)$, a set of \emph{connections} $C \subseteq \NI \times \NR(\C) \times \NI$ and a \emph{guess mapping} $f : \{1,\ldots,l\} \rightarrow \{1,\ldots,l\}$, such that for each $1 \leq i,j \leq l$, $\emptyset \subsetneq \gamma_i \subseteq \sub(\C) \cup \{\lnot s \mid s \in \sub(\C)\}$ or $\gamma_i = \bot$, $a_i \in \gamma_{f(i)}$, $a_i \not \in \gamma_j$, for all $j \not = f(i)$, if $\NI \cap \gamma_i = \emptyset$, then $\gamma_i = \bot$, and $(a_i,r,a_j) \in C$ implies $(a_j, r^-,a_i) \in C$.
\end{definition}

\begin{prrep}\label{pr:2ata}
    For each shapes pair $(\C,s)$ and each guess $\guess$ for $\C$,
    we can define a 2ATA $\automaton(\C,s,\guess)$ such that:
    \begin{enumerate}
        \item If $\trpe(s)$ is satisfiable, then for some guess $\guess$, the language recognised by $\automaton(\C,s,\guess)$ is non-empty.
        \item If for some guess $\guess$ the language of $\automaton(\C,s,\guess)$ is not empty, then  $\trpe(s)$ is satisfiable.
        \item The number of states in $\automaton(\C,s,\guess)$ is linear in $|\sub(\C)|$.
    \end{enumerate}
\end{prrep}

The construction can be found in the appendix. It is a relatively straightforward adaptation of the translation in \citeauthor{DBLP:conf/cade/SattlerV01} (\citeyear{DBLP:conf/cade/SattlerV01}). The core of the construction uses two states 
$\trpt(\varphi)$ and $\trmt(\varphi)$ for each $\varphi$ in $\sub(\C)$, corresponding to the two functions of the translation. Note that we can ignore the set $S$ in this setting, as we no longer try to build a linear formula capturing a form of circularity. Moreover, the parity condition setting $p(\trpt(\varphi)) = 1$ for each $\varphi$ in $\sub(\C)$, and for all other states $p(q) = 0$, reflects exactly the limited alternation of our translation.

\begin{proof}
    We use $a_1,\ldots,a_l$ to denote the nominals occurring in $\C$, 
    Furthermore, we assume all input trees are $k$-ary full trees; all non-leaf nodes have $k$ children, where $k$ is the maximum of the amount of $\langle r \rangle \psi \in \sub(\C)$, and $l$ plus one. 

    Let $\automaton(\C,s,\guess) := \langle \Sigma, Q, \delta, \Tilde{q}, p\rangle$, be defined as follows. As the automaton construction remains so close to the automaton constructed by Sattler and Vardi (2001), we will just provide the technical details and refer the reader to their work for more intuition and explanation. Let
    $$
    \Sigma = \{\bot, \rot\} \cup \{\sigma \mid \sigma \subseteq \NC(\C) \cup \NI(\C) \cup \roles(\C) \cup \{\xrightarrow{r}a_i \mid r\in \roles(\C), 1 \leq i \leq l\}\},
    $$
    where $\NC(\C)$, $\NI(\C)$ and $\roles(\C)$ are the restrictions of the alphabets $\NC$, $\NI$ and $\roles$ to the symbols used in $\C$.
    Let
    \begin{align*}
        Q &= \{\bot,\Tilde{q},q_0,q_0',q_1,\ldots,q_l,q,q',q''\} \cup \{\trpt(\varphi), \trmt(\varphi) \mid \varphi \in \sub(\C) \cup \{\lnot s \mid s \in \sub(\C)\}\} \cup \{r,\lnot r \mid r \in \roles(\C)\}\\
        p(q) &= \begin{cases}
            1  &\text{if } q = \trpt(\varphi), \varphi \in \sub(\C)\\
            0 &\text{otherwise.}
        \end{cases}
    \end{align*}
Note it is immediate that the number of states in linear in $|\sub(\C)|$. Finally, we define the transition function in the following way, for each $\sigma \in \Sigma$.
\begin{align*}
    \delta(\Tilde{q},\sigma) &= (0,q_0) \land (0,q_0')\\
    \delta(q_0,\sigma) &= \begin{cases}
        \bigwedge_{i=1}^l (i,q_i) \land \bigwedge_{i=l+1}^k (i,q) \land \bigwedge_{i=1}^k (i,q'') &\text{if } \rot=\sigma\\
        \bot &\text{otherwise}
    \end{cases}\\
    \delta(q'',\sigma) &= \begin{cases}
        \top &\text{if } r \not \in \sigma, r^- \not \in \sigma, \text{ for each } r \in \roles(\C)\\
        \bot &\text{otherwise}
    \end{cases}
\end{align*}
In the following, let $1 \leq i \leq l$, $\hat{q} \in Q \setminus \{q_0,q_1,\ldots,q_l,q'',q\}$, $r \in \roles(\C)$, 
$p \in \NI \cup NC$, $\hat{\sigma} \in \Sigma$, $\sigma \in \Sigma \setminus \{\bot\}$ and moreover
\begin{align*}
    \Gamma(i) &= \begin{cases}
        (i,\bot) &\text{if } \gamma_i = \bot\\
        \bigwedge_{\varphi \in \gamma_i} (i,\trpt(\varphi)) &\text{if } \gamma_i \subseteq \sub(\C) \cup \{\lnot s \mid s \in \sub(\C)\}
    \end{cases}\\
\end{align*}
then
\begin{align*}
    \delta(q_i,\hat{\sigma}) &= \begin{cases}
        \bigwedge_{i=1}^k (i,q) &\text{if } \gamma_i \cap (\NI \cup \NC) = \hat{\sigma} \cap (\NI \cup \NC), \rot \not = \hat{\sigma}, \\
        &\text{and, for each } a \in \NI \cap \hat{\sigma} \text{ and } (a,r,a') \in C, \xrightarrow{r} a' \in \hat{\sigma}\\
        \bot &\text{otherwise}
    \end{cases}\\
    \delta(q,\hat{\sigma}) &= \begin{cases}
        \bigwedge_{i=1}^k (i,q) &\text{if } \hat{\sigma} \cap \NI = \emptyset \text{ and } \rot \not = \hat{\sigma}\\
        \bot &\text{otherwise}
    \end{cases}\\
    \delta(\hat{q},\bot) &= \begin{cases}
        \top &\text{if } \hat{q} = \bot\\
        \bot &\text{otherwise}
    \end{cases}\\
    \delta(\bot,\hat{\sigma}) &= \begin{cases}
        \top &\text{if } \hat{\sigma} = \bot\\
        \bot &\text{otherwise}
    \end{cases}\\
    \delta(q_0',\sigma) &= \bigwedge_{i = 1}^l \Gamma(i) \land \bigvee_{i=1}^k (i,\trpt(s)) \land \land \bigwedge_{i=1}^k ((1,q') \lor (i,\bot))\\
    \delta(q',\sigma) &= \bigwedge_{\xrightarrow{r^-}a_i \in \sigma, \forall r.s \in \gamma_{f(i)}} (0,\trpt(s)) \land \bigwedge_{\xrightarrow{r^-}a_i \in \sigma, \{\exists r.s, \lnot s'\} \subseteq \gamma_{f(i)},s' \gets \exists r.s \in \C} (0,\trmt(s)) \land \bigwedge_{i = 1}^k ((i,q') \lor (i,\bot))\\
    \delta(r,\sigma) &= \begin{cases}
        \top &\text{if } r \in \sigma\\
        \bot &\text{otherwise}
    \end{cases}\\
    \delta(\lnot r,\sigma) &= \begin{cases}
        \top &\text{if } r \not \in \sigma \text{ and } \sigma \not = \rot\\
        \bot &\text{otherwise}
    \end{cases}\\
    \delta(\trpt(p),\sigma) &= \begin{cases}
        \top &\text{if } p \in \sigma\\
        \bot &\text{otherwise}
    \end{cases}\\
    \delta(\trpt(s \land s'),\sigma) &= (0,\trpt(s)) \land (0,\trpt(s'))\\
    \delta(\trpt(s \lor s'), \sigma) &= (0,\trpt(s)) \lor (0,\trpt(s'))\\
    \delta(\trpt(\lnot s),\sigma) &= (0,\trmt(s))\\
    \delta(\trpt(s),\sigma) &= (0,\trpt(\C(s)))\\
    \delta(\trpt(\exists r.s),\sigma) &= \begin{cases}
        \top &\text{if } \xrightarrow{r}a_i \in \sigma, s \in \gamma_{f(i)}\\
        \bigvee_{j=1}^k ((j,\trpt(s)) \land (j,r)) &\text{otherwise}
    \end{cases}\\
    \delta(\trpt(\forall r.s),\sigma) &= \begin{cases}
        \bot &\text{if } \xrightarrow{r}a_i \in \sigma, s \not \in \gamma_{f(i)}\\
        ((-1,\trpt(s)) \lor (0,\lnot r^-)) \land \bigwedge_{j=1}^k ((j,\trpt(s)) \lor (j,\lnot r) \lor (j, \bot)) &\text{otherwise}
    \end{cases}\\
    \delta(\trmt(p),\sigma) &= \begin{cases}
        \top &\text{if } p \not \in \sigma \text{ and } \sigma \not = \rot\\
        \bot &\text{otherwise}
    \end{cases}\\
    \delta(\trmt(s \land s'),\sigma) &= (0,\trmt(s)) \lor (0,\trmt(s'))\\
    \delta(\trmt(s \lor s'), \sigma) &= (0,\trmt(s)) \land (0,\trmt(s'))\\
    \delta(\trmt(\lnot s),\sigma) &= (0,\trpt(s))\\
    \delta(\trmt(s),\sigma) &= (0,\trmt(\C(s)))\\
    \delta(\trmt(\exists r.s),\sigma) &= \begin{cases}
        \bot &\text{if } \xrightarrow{r}a_i \in \sigma, \lnot s \not \in \gamma_{f(i)}\\
        ((-1,\trmt(s)) \lor (0,\lnot r^-)) \land \bigwedge_{j=1}^k ((j,\trmt(s)) \lor (j,\lnot r) \lor (j, \bot)) &\text{otherwise}
    \end{cases}\\
    \delta(\trmt(\forall r.s),\sigma) &= \begin{cases}
        \top &\text{if } \xrightarrow{r}a_i \in \sigma, \lnot s \in \gamma_{f(i)}\\
        \bigvee_{j=1}^k ((j,\trmt(s)) \land (j,r)) &\text{otherwise.}
    \end{cases}
\end{align*}
\end{proof}

\paragraph{(Finite) Shape Satisfiability w.r.t.\ Constraints.}

Based on the aforementioned 2ATA construction, we can now derive the first set of complexity results.

\begin{theorem}\label{theorem:satisfiability-alcio}
    Deciding shape satisfiability w.r.t constraints for \alcio SHACL is \exptime-complete, under the well-founded semantics.
\end{theorem}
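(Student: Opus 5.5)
For the upper bound, we reduce shape satisfiability to 2ATA emptiness via Proposition~\ref{pr:2ata}. By Theorem~\ref{theorem:translationiscorrect} and Corollary~\ref{cor:shapeSat}, $s$ is satisfiable w.r.t.\ $\C$ under the well-founded semantics iff $\trpe(s)$ is satisfiable. By items~1 and~2 of Proposition~\ref{pr:2ata}, the latter holds iff there is a guess $\guess$ for which $\automaton(\C,s,\guess)$ has non-empty language.

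The algorithm enumerates all guesses and tests emptiness of each automaton. A guess consists of a guess list of $l$ sets $\gamma_i\subseteq \sub(\C)\cup\{\lnot s\mid s\in\sub(\C)\}$ (or $\bot$), a mapping $f$, and a set of connections $C\subseteq \NI\times\NR(\C)\times\NI$. Here $l$ and $|\sub(\C)|$ are linear in $|\C|$, so the number of guesses is at most $2^{O(|\C|^2)}$, i.e.\ single exponential. Each $\automaton(\C,s,\guess)$ has $O(|\sub(\C)|)$ states by item~3, and its priority function uses only the values $0$ and $1$. The alphabet $\Sigma$ and the branching degree $k$ are at most exponential and linear in $|\C|$, respectively, and each transition formula is polynomial in size. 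Emptiness of a 2ATA with $n$ states and a constant number of priorities is decidable in time exponential in $n$ (and the number of priorities) and polynomial in $|\Sigma|$ and $k$. This follows by converting to a nondeterministic parity tree automaton with $2^{O(n\cdot \text{poly})}$ states, as in Vardi's two-way automata result used by \citeauthor{DBLP:conf/cade/SattlerV01}. Hence each emptiness test takes $2^{\mathrm{poly}(|\C|)}$ time, and so does the whole enumeration.

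The main point to verify is that nothing in the pipeline depends on the size of the translated formula $\trpe(s)$, which may be exponentially large. This is exactly why Proposition~\ref{pr:2ata} uses the states $\trpt(\varphi),\trmt(\varphi)$ indexed by $\sub(\C)$ rather than by subformulas of $\trpe(s)$. Because the translation is used only as a correctness bridge (items~1 and~2), the exponential blow-up never enters the complexity. If the normal-form transformation is needed first, it is polynomial, so the bound is preserved.

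For the lower bound, we reduce from concept satisfiability w.r.t.\ general TBoxes in \alc, which is \exptime-hard. Take a TBox in normal form, with negation only in front of concept names, and the concept $D$ to be tested. For each concept $E$ in the TBox closure introduce a shape $s_E$ with constraint $s_E\gets E'$, where $E'$ replaces subconcepts by shape names and a negated concept name $\lnot A$ by a fresh shape $\lnot s_A$ with $s_A\gets A$. The resulting constraint set is stratified, so the well-founded model is total and coincides with the classical interpretation of the concepts. To enforce the TBox everywhere reachable, set $s^*\gets s_{\top\sqsubseteq C_\T}\land\forall r.s^*$ for all roles $r$, and test $s\gets s_D\land s^*$. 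Here $s^*$ is a positive recursion, so under the least-fixed-point reading, satisfaction requires every node reachable from the root to satisfy the internalised TBox concept. Because \alc is invariant under generated subgraphs, the restriction to reachable nodes is harmless. Thus $D$ is satisfiable w.r.t.\ the TBox iff $s$ is satisfiable w.r.t.\ the constraints, which shows hardness already for \alc SHACL.
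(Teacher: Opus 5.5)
Your upper bound matches the paper's argument: enumerate the single-exponentially many guesses and test emptiness of each $\automaton(\C,s,\guess)$, whose number of states is linear in $|\sub(\C)|$. One small slip: the connection sets alone give $2^{l^2|\NR(\C)|}$ guesses, so the count is $2^{O(|\C|^3)}$ rather than $2^{O(|\C|^2)}$, which is still single exponential.

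The lower bound has a genuine gap. Under the well-founded semantics, a positively recursive shape denotes a \emph{least} fixed point; the paper's translation gives $\trpe(s^*)=\mu X_{s^*}.(\cdots\land\bigwedge_r[r]X_{s^*})$. So $s^*\gets s_{C_{\mathcal{T}}}\land\bigwedge_r\forall r.s^*$ holds at a node only if every reachable node satisfies $C_{\mathcal{T}}$ \emph{and} no infinite path starts there. Concretely, take a single node $a$ with $r(a,a)$ and $C_{\mathcal{T}}$ true at $a$. Then $\{s^*(a)\}$ is unfounded, so $\lnot s^*(a)$ is in the well-founded model. Hence for a TBox such as $\top\sqsubseteq\exists r.\top$, every model has an infinite path (a cycle, in a finite graph). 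So $D$ is satisfiable w.r.t.\ the TBox while your $s$ is unsatisfiable, and the ``if'' direction of your reduction fails.

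What you need is a greatest fixed point, which stratified negation provides. Put $\bar s\gets\lnot s_{C_{\mathcal{T}}}\lor\bigvee_r\exists r.\bar s$ and $s\gets s_D\land\lnot\bar s$. This is still stratified, and $\bar s$ holds exactly where a $C_{\mathcal{T}}$-violation is reachable. So $\lnot\bar s$ expresses $\nu X.(C_{\mathcal{T}}\land\bigwedge_r[r]X)$. Your argument then goes through using the finite model property of \alc w.r.t.\ TBoxes and invariance under generated subinterpretations, giving hardness already for \alc SHACL.

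The paper takes a recursion-free route instead. It invokes \exptime-hardness of plain \alcio concept satisfiability without TBoxes (Tobies). That problem is expressible by a non-recursive constraint set, where the well-founded model is just the classical semantics. This is simpler but needs inverses and nominals, whereas your corrected reduction yields the stronger statement for \alc SHACL.
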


\begin{proof}
    As noted by \citeauthor{DBLP:conf/cade/SattlerV01} (\citeyear{DBLP:conf/cade/SattlerV01}), emptiness of $\automaton(\C,s,\guess)$ can be decided in exponential time in the number of states. Moreover, the amount of guesses is also bounded by the amount of connections and guess lists, which means there are at most exponentially many automata $\automaton(\C,s,\guess)$ that need to be tested for emptiness, thus shape satisfiability w.r.t. constraints can be decided in \exptime.
    
    \exptime-hardness follows from the fact that plain shape implication is equivalent to concept satisfiability in the description logic $\mathcal{ALCIO}$, which is known to be \exptime-complete \cite{DBLP:journals/jair/Tobies00}. 
    For a more detailed coverage of restricted settings that are already \exptime-hard, see \cite{dl2025}.
\end{proof}

For the finite version of the above problem we do not provide tight bounds for full \alcio SHACL, but decidability already follows from finite satisfiability of full hybrid $\mu$-calculus being decidable. Before discussing this, we first consider the finite model property under the well-founded semantics.

\begin{pr}\label{prop:fmp}
    \alco SHACL has the finite model property, \alci SHACL does not, under the well-founded semantics.
\end{pr}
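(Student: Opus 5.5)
Two parts: a positive result for \alco SHACL and a counterexample for \alci SHACL.

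\textbf{\alco SHACL has the finite model property.} I would use Corollary~\ref{cor:shapeSat} and Theorem~\ref{thm:docImplicReduc}. A satisfiable \alco SHACL document yields a satisfiable hybrid $\mu$-calculus formula with nominals but \emph{no} inverse programs: the translation in Figure~\ref{fig:translationfunctions} introduces $r^-$ only when $r^-$ already occurs in $\C$. The one exception is the propagation formula $\Lambda_{\C,\T,p}$, which mentions $[r^-]$.

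For shape satisfiability this is harmless. The formula $\trpe(s)$ lies in the hybrid $\mu$-calculus without converse, which has the finite model property: a satisfiable formula has a model whose non-nominal part is tree-like, and that model can be collapsed to a finite one by the standard filtration/unravelling argument with a regular (parity) automaton witness. Concretely, the 2ATA of Proposition~\ref{pr:2ata} becomes one-way apart from the moves needed to visit the root/nominal nodes. Its nonempty language then contains a regular tree, which can be folded into a finite graph. Because forward $\mu$-calculus formulas are invariant under bisimulation, the folded graph is bisimilar to the tree and still satisfies $\trpe(s)$. Theorem~\ref{theorem:translationiscorrect} transfers this back to validation.

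For documents, the main obstacle is that target validation must hold at \emph{every} node, which is a global constraint. I would avoid $\Lambda$ and argue directly. Targets of the form $(A,s)$ and $(r,s)$ are universal requirements of the form "every node satisfies $\neg A\lor \trpe(s)$". Universal requirements survive the folding, because every node of the folded graph is bisimilar to a node of the regular tree, and the tree satisfies the requirement at all of its nodes. Nodes with an outgoing $r$-edge are handled the same way: their $r$-successors are preserved under bisimulation.

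\textbf{\alci SHACL lacks the finite model property.} I would take the classical \alci TBox without finite models and encode it with a single \emph{stratified} (negation-free) shape, using $\nu$-style behaviour through a cyclic negative definition. A first candidate is
\[
s \gets \exists r.s \land \forall r^-.u,\quad u\gets \ldots
\]
but it is cleaner to use target propagation. Let $\C=\{t \gets \exists r.\top' \land \lnot d,\; d \gets \exists r^-.e,\; e\gets \exists r^-.\top'\}$, where $\top'$ is given by a shape defined as $A\lor\lnot$-free concept name $\top$ (assume $\top\in\NC$ holds everywhere), together with the target $(\top,t)$ and an extra target $(a,z)$ with $z\gets \lnot f$, $f\gets\exists r^-.\top$.

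This setup forces three things at every node: an $r$-successor exists, no node has two $r$-predecessors, and the root $a$ has no predecessor. Together these give an infinite injective $r$-chain starting at $a$, which is impossible in a finite graph by pigeonhole. The chain itself shows the document is satisfiable. Since all these constraints are stratified, $\wfgc$ is total and coincides with the classical reading, so the verification is routine.

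The delicate step is expressing "at most one $r$-predecessor" with a shape. Functionality is not expressible in \alci, so I would instead use the standard \alci non-finite-model trick: a concept $B$ at the root, $\forall r^-.\bot$ at the root, every node has an $r$-successor, and every node has $\forall r.\neg B$. On its own this still admits finite models via cycles that avoid the root. The actual standard example is
\[
\top \sqsubseteq \exists r.\top \sqcap\, {\leq}1\, r^-,
\]
which needs counting. Without counting, I would fall back on the known fact that $\mathcal{ALCI}$ with least/greatest fixpoints (the \emph{full} $\mu$-calculus with converse) lacks the finite model property, for example via $\mu X.[r^-]X$ ("no infinite backward path", i.e.\ well-foundedness) combined with $\top\sqsubseteq\exists r^-.\top$. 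A least fixed point is exactly what the well-founded semantics delivers. So I would take $\C=\{w\gets \forall r^-.w,\; g\gets w\land \exists r^-.\top'\}$ with target $(\top,g)$. In the well-founded model, $w$ holds exactly at nodes with no infinite backward $r$-path. Every node needs an $r$-predecessor and must satisfy $w$. A finite graph in which every node has a predecessor contains a cycle, so $w$ fails there. An infinite $\mathbb{N}$-chain with edges $r(i+1,i)$ works. Verifying that $w$ is computed as this least fixed point follows from Theorem~\ref{theorem:translationiscorrect}, since $\trpe(w)=\mu X_w.[r^-]X_w$.
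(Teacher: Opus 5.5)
\textbf{The \alco part.} This follows the paper's route. Through Theorem~\ref{theorem:translationiscorrect}, $\trpe(s)$ lies in the converse-free hybrid $\mu$-calculus, which has the finite model property. The paper simply cites Tamura for this; your regular-tree folding sketches why it holds. Your direct treatment of class and role targets, which avoids the converse modalities in $\Lambda_{\C,\T,p}$, is more careful than the paper, which argues only at the level of single shapes.

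\textbf{The gap in the \alci part: your final document is unsatisfiable even over infinite graphs.}
\begin{itemize}
\item The target $(\top,g)$ with $g\gets w\land\exists r^-.\top'$ demands an $r$-predecessor at every node. So from any node you can follow predecessors forever, and every node has an infinite backward $r$-path.
\item But $w\gets\forall r^-.w$, whose translation is $\mu X_w.[r^-]X_w$, holds exactly at the nodes that have no such path.
\item Your proposed witness shows the failure concretely. In the chain $r(i+1,i)$, the $r^-$-successor of $i$ is $i+1$. So $\{w(i)\mid i\in\mathbb{N}\}$ is an unfounded set, and $\lnot w(i)$ is in the well-founded model for every $i$.
\end{itemize}
Both of your requirements point backward. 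The classical formula without finite models, $\nu X.(\langle r\rangle X\land\mu Y.[r^-]Y)$, instead combines an infinite \emph{forward} path with \emph{backward} well-foundedness.

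There is a second, independent problem. A document whose only targets are class targets is validated vacuously by the empty graph, since a class target imposes nothing when the class has no instances. Such a document can never witness failure of the finite model property; the requirement has to be anchored at a node.

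\textbf{A repair.} Take $w\gets\forall r^-.w$ and $g\gets w\land\exists r.\top$ with targets $\{(a,g),(\top,g)\}$.
\begin{itemize}
\item The chain $r(i,i+1)$ on $\mathbb{N}$, with $a=0$ and $\top$ asserted everywhere, validates it.
\item In a finite graph, the forward chain of $\top$-successors from $a$ must enter a cycle. The cycle's nodes have infinite backward paths and so violate $w$.
\end{itemize}

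\textbf{The paper's route.} The paper asks a single node to satisfy a shape $s$. It obtains the outer greatest fixed point from negation of a recursive shape rather than from a global target: $s\gets\lnot s'$, $s'\gets\forall r.s'\lor\lnot s''$, $s''\gets\forall r^-.s''$. Then $\trpe(s)$ cleans to $\nu X.(\langle r\rangle X\land\mu Y.[r^-]Y)$. Note that the body of $s'$ must be a disjunction for $\lnot s'$ to unfold to this conjunction; the paper prints $\land$ there.
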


\begin{proof}
    Given the result of Theorem \ref{theorem:translationiscorrect}, the finite model property for \alco SHACL follows directly from \cite{DBLP:journals/logcom/Tamura15}. To see that \alci SHACL does not have the finite model property, consider the following set of (stratified) constraints:
        $\C = \{s \gets \lnot s', s' \gets \forall r.s' \land \lnot s'', s'' \gets \forall r^-.s''\}$.
    Satisfiability of $s$ against $\C$ corresponds to satisfiability of $\final(\trpe(s)) = \nu X_{\Bar{s'}}. \langle r \rangle X_{\Bar{s'}} \land \mu Y_{s''}.[r^-] Y_{s''}$, which is a well-known example of a modal $\mu$-calculus formula that only has infinite models.
\end{proof}

\begin{theorem}\label{theorem:finitesat}
        Deciding finite shape satisfiability w.r.t.\ constraints for \alcio SHACL is decidable, under the well-founded semantics.
\end{theorem}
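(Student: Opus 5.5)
The plan is to reduce finite shape satisfiability to finite satisfiability of a full hybrid $\mu$-calculus formula, and then appeal to known decidability of the latter. Given a shapes pair $(\C,s)$ in \alcio SHACL, first normalise $\C$ as described in the preliminaries. This is validation-preserving, so finite satisfiability of $(\C,s)$ is unchanged. Then form $\Phi := \final(\trpe(s))$.

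By Corollary~\ref{cor:shapeSat}, which rests on Theorem~\ref{theorem:translationiscorrect}, $s$ is finitely satisfiable w.r.t.\ $\C$ under the well-founded semantics iff $\trpe(s)$ is finitely satisfiable. By the proposition on cleaning, this holds iff $\Phi$ is finitely satisfiable. I will spell out why the finite case follows. The correctness statement of Theorem~\ref{theorem:translationiscorrect} is quantified over every data graph $\I$, and data graphs are finite sets of atoms. Hence a finite graph validating $(\C,s(c))$ is itself a finite structure in which $c$ satisfies $\trpe(s)$.

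For the converse direction I need one compatibility check. A finite Kripke structure satisfying $\Phi$ must be readable as a data graph. This requires that nominals are interpreted as singletons, and that every individual occurring in $\C$ occurs in the structure. Individuals occurring in $\C$ can always be added as isolated nodes, or are already present because nominals denote nodes. Adding fresh isolated nodes does not change the truth of $\Phi$ at $c$ only if $\Phi$ does not mention those individuals. So the clean choice is to restrict attention to structures in which each nominal of $\C$ denotes exactly one node, which is the standard hybrid semantics. Under this restriction the two notions of finite model coincide.

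The remaining step is to invoke decidability of finite satisfiability for the full hybrid $\mu$-calculus, meaning the $\mu$-calculus with converse programs and nominals. This is the main obstacle. Unlike general satisfiability, which the paper handles via the 2ATA of Proposition~\ref{pr:2ata}, finite satisfiability cannot be read off from tree-automata emptiness, because the logic lacks the finite model property (Proposition~\ref{prop:fmp}). I would cite the known result of Bojańczyk for the two-way $\mu$-calculus and its extension to the full hybrid setting (e.g.\ via the guarded fixpoint logic results of Bárány and Bojańczyk, into which the full hybrid $\mu$-calculus embeds with nominals as constants). I would verify that this target logic covers inverse modalities and nominals.

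Since the translated formula is at most exponentially large, decidability is preserved; no complexity bound is claimed, consistent with the statement.
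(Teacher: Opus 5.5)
Your route is the paper's: Theorem~\ref{theorem:translationiscorrect} (via Corollary~\ref{cor:shapeSat}) reduces the problem to finite satisfiability of $\trpe(s)$. Decidability of the latter then comes from B\'ar\'any and Boja\'nczyk's result on finite satisfiability of guarded fixpoint logic, which accommodates the converse modalities.

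\textbf{The missing step: nominals.} The step you leave as ``I would verify'' is exactly the one the paper has to supply. Saying that the full hybrid $\mu$-calculus embeds ``with nominals as constants'' does not by itself land you in a logic with a known finite-satisfiability procedure. The paper does not take constants in guarded fixpoint logic for granted. It eliminates them with the technique of ten Cate and Franceschet: if $a_1,\ldots,a_n$ are the individuals used, prefix every predicate with variables $x_1,\ldots,x_n$ and use $x_i$ in place of $a_i$. This is legitimate precisely because guarded fixpoint logic places no bound on predicate arity. It also leaves the domain unchanged, so finite satisfiability is preserved in both directions. Without this elimination, or a reference establishing decidable finite satisfiability for guarded fixpoint logic \emph{with} constants, your final appeal is not backed by a known result.

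\textbf{A smaller error: distinct individuals.} You claim that finite hybrid models with singleton nominals coincide with finite data graphs. This is not quite right. In a data graph, distinct individuals are distinct nodes, whereas the standard hybrid semantics lets two nominals name the same node. For example, take $\C=\{s\gets s_1\land s_2,\ s_1\gets a,\ s_2\gets b\}$. Here $s$ is unsatisfiable, yet $\trpe(s)$ is equivalent to $a\land b$, which has a one-node hybrid model. The repair is easy: conjoin $\langle p\rangle(a\land\lnot b)$ for each pair of distinct individuals, with $p$ a fresh role, or require the $x_i$ above to be pairwise distinct. The paper's short proof does not spell this out either, but your stated equivalence needs it.
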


\begin{proof}
    Given the result of Theorem \ref{theorem:translationiscorrect}, the finite satisfiability of the hybrid $\mu$-calculus being decidable suffices. The latter follows from the finite satisfiability of guarded fixed point logic, which naturally extends the modal $\mu$-calculus with backwards modalities \cite{DBLP:journals/ipl/BaranyB12}. In guarded fixed point, given that there is no bound on the arity of predicates, we can add a prefix $x_1,\ldots,x_n$ to every predicate, where $n$ is the amount of individuals used, and use $x_i$ in place of the $i$-th individual, to model the usage of nominals \cite{DBLP:journals/jolli/CateF05}.
\end{proof}

For upper bounds, recall the translation of SHACL into $\mu$-calculus may come at the cost of a formula of size exponential in the size of the constraint set. Thus, we can combine finite satisfiability results for different fragments of the full hybrid $\mu$-calculus with the result of our translation (Theorem \ref{theorem:translationiscorrect}) by adding one exponential. Considering how nominals can be eliminated in guarded formulas as described above, at the cost of not being able to bound the arity of predicates \cite{DBLP:journals/jolli/CateF05}, it follows from the work by \citeauthor{DBLP:journals/ipl/BaranyB12} (\citeyear{DBLP:journals/ipl/BaranyB12}) that finite satisfiability for \alcio SHACL can be solved in 3-\exptime. For \alci SHACL, 2-\exptime-membership for deciding finite satisfiability already follows from work by \citeauthor{DBLP:conf/icalp/Bojanczyk02} (\citeyear{DBLP:conf/icalp/Bojanczyk02}).
We have no reason to believe these upper bounds are tight: they are obtained via multiple translations that cause one or more exponential blow-ups, which could well be avoidable in a more direct approach.

\paragraph{(Finite) Satisfiability and Implication of Documents.}

With a small update of the automaton construction, the \exptime complexity results transfer to the problem of document satisfiability.

\begin{theoremrep}\label{theorem:satisfiability-alcio}
    Deciding document satisfiability for \alcio SHACL is \exptime-complete, under the well-founded semantics. Finite document satisfiability for \alcio SHACL is decidable.
\end{theoremrep}

\begin{proof}
    The main idea is to take the 2ATA $\automaton(\C,s,\guess)$ for some $(c,s) \in \T$, and update the transition function $\delta$ in the following way. For every $(D,s) \in \T$, with $D \in \NC \cup \roles$, if $D \in \sigma$, the transition function for the state that is traversing the whole tree already, $q'$, is updated with an extra conjunct: $(0,\trpt(s))$. For the individuals $a_i \in \NI$, this can directly be taken care of by adding $(i,\trpt(s))$ as a conjunct to $\delta(q_0',\sigma)$, where $q_0'$ is the state that already ensures all nominals in the graph are assigned the right states.

    More specifically, taking the exact same automaton as in the proof of Proposition \ref{pr:2ata}, with the following being redefined as described below, suffices.

    \begin{align*}
    \Gamma(i) &= \begin{cases}
        (i,\bot) &\text{if } \gamma_i = \bot\\
        \bigwedge_{\varphi \in \gamma_i} (i,\trpt(\varphi)) \land \bigwedge_{(a_i,s)\in\T} (i,\trpt(\varphi)) &\text{if } \gamma_i \subseteq \sub(\C) \cup \{\lnot s \mid s \in \sub(\C)\}
    \end{cases}\\
    \delta(q',\sigma) &= \bigwedge_{\xrightarrow{r^-}a_i \in \sigma, \forall r.s \in \gamma_{f(i)}} (0,\trpt(s)) 
    \land \bigwedge_{(D,s)\in\T,D \in (\NC \cup \roles)\cap \sigma} (0,\trpt(s))\\
    &\qquad\land \bigwedge_{\xrightarrow{r^-}a_i \in \sigma, \{\exists r.s, \lnot s'\} \subseteq \gamma_{f(i)},s' \gets \exists r.s \in \C} (0,\trmt(s)) \land \bigwedge_{i = 1}^k ((i,q') \lor (i,\bot))
\end{align*}
For finite satisfiability, we note decidability follows from among others the combination of Proposition \ref{pr:docsat-docimp}, Theorem \ref{thm:docImplicReduc} combined with reasoning analogous to the proof of Theorem \ref{theorem:finitesat}.
\end{proof}

A direct consequence of this result is that for \alco SHACL deciding finite satisfiability \wrt constraints under the well-founded semantics is \exptime-complete, 
since the fragment 
has the finite model property; see Proposition \ref{prop:fmp}.

As the complement of an 2ATA can be constructed in polynomial time \cite{DBLP:journals/tcs/MullerS87}, it follows that deciding whether $(\C,\T)$ implies $(\C',\T')$ under the well-founded semantics for \alcio SHACL can be decided by taking the conjunction of the automaton checking document satisfiability for $(\C,\T)$ and the complement of the automaton recognising document satisfiability for $(\C',\T')$.

\begin{theorem}
    Deciding document implication for \alcio SHACL is \exptime-complete, under the well-founded semantics. Finite document implication for \alcio SHACL is decidable.
\end{theorem}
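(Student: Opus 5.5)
The plan is to prove the upper bound for document implication by an automata-theoretic product construction, to inherit the lower bound from satisfiability, and to settle the finite case through the reduction to the full hybrid $\mu$-calculus.

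For the \exptime upper bound, fix documents $(\C,\T)$ and $(\C',\T')$. By definition, $(\C,\T)$ fails to imply $(\C',\T')$ iff some graph $\G$ satisfies $(\C,\T)$ but not $(\C',\T')$. Since the well-founded model is unique, the second condition means that some target of $\T'$ is violated in $\mathit{WF}_{\G,\C'}$. I would reuse the modified 2ATA from the proof of document satisfiability (built on Proposition~\ref{pr:2ata}), which accepts exactly the tree encodings of graphs validating a document for a fixed guess. To make guesses compatible, I first merge the two constraint sets after renaming shape names apart. I then take a single guess over $\sub(\C)\cup\sub(\C')$ and all nominals of both documents. This way, both automata read the same tree encoding and agree on how the nominals are configured.

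The complement of the automaton for $(\C',\T')$ is a 2ATA of linear size: we dualise $\wedge/\vee$ and $\top/\bot$ in the transitions and shift priorities by one, following \cite{DBLP:journals/tcs/MullerS87}. The intersection of the two automata is then obtained by a fresh initial state whose transition is the conjunction of the two initial transitions. The resulting automaton has polynomially many states and a parity index bounded by a constant. Its emptiness is therefore decidable in time exponential in the number of states. There are at most exponentially many guesses, so enumerating them and testing each product for emptiness stays in \exptime. Correctness combines Theorem~\ref{theorem:translationiscorrect} with items 1 and 2 of Proposition~\ref{pr:2ata}, adapted to documents as in the satisfiability proof.

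The main obstacle is the complementation step. A tree encoding also contains junk, namely $\bot$-labelled padding and the nominal-guess states. The complement must be taken only relative to well-formed encodings consistent with the fixed guess, not relative to all trees. Otherwise, malformed trees would be accepted vacuously by the complement. I would handle this by complementing only the component that starts from $\trpt(s)$ together with the target-checking part, i.e.\ the states $q'$ and $q_0'$ extended with the target conjuncts. The well-formedness states $q_0,q_1,\ldots,q_l,q,q''$ are kept positive in the product, so this check is not negated. A further subtlety is that the guess $\gamma_i$ records which shapes hold at the nominals. Under complementation, these consistency checks at nominals must likewise stay positive, and only the targets are negated. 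This amounts to replacing the disjunction of violated targets by an existential search over tree nodes.

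\exptime-hardness follows from the hardness part of Theorem~\ref{theorem:satisfiability-alcio} via Proposition~\ref{pr:docsat-docimp}. Document satisfiability reduces to non-implication of $(\emptyset,\emptyset)$, and \exptime is closed under complement. For finite implication, Theorem~\ref{thm:docImplicReduc} reduces the problem to finite unsatisfiability of a full hybrid $\mu$-calculus formula. I would then argue as in the proof of Theorem~\ref{theorem:finitesat}: eliminate nominals via extra argument positions \cite{DBLP:journals/jolli/CateF05} and invoke decidability of finite satisfiability of guarded fixpoint logic \cite{DBLP:journals/ipl/BaranyB12}.
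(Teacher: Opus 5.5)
Your route is the paper's. You intersect the document-satisfiability automaton for $(\C,\T)$ with the Muller--Schupp dual of the one for $(\C',\T')$, inherit hardness from satisfiability, and get the finite case from Theorem~\ref{thm:docImplicReduc} plus guarded fixpoint logic. The paper does this in one sentence and never mentions that the automata are built per guess. You are right that this needs care, and merging the constraint sets under a single guess is the right move. Your repair, however, leaves a hole.

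\textbf{The gap: guesses verified only positively.} Keeping the nominal checks positive only certifies that every formula in $\gamma_i$ holds at $a_i$. Since $\Gamma(i)$ tests members of $\gamma_i$ and nothing else, a guess may omit formulas that are in fact true at $a_i$. The transitions you dualise read the guess in both directions. For instance, $\trpt(\exists r.t)$ accepts a nominal $r$-successor $a_i$ as a witness iff $t\in\gamma_{f(i)}$, so its dual treats $t\notin\gamma_{f(i)}$ as meaning that $t$ fails at $a_i$.

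\emph{Counterexample.} Put $t\gets B$ and $s'\gets\exists r.t$ with target $(A,s')$ in $\C'$, and let $(\C,\T)$ contain $u\gets\exists r.(a_1\land B)$ with target $(A,u)$. Implication holds. Yet a guess with $t\notin\gamma_{f(1)}$ passes every positive check. On a tree in which some $A$-node's only $r$-successor is $a_1$, the dualised target check then reports $s'$ as violated, so your product accepts.

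\emph{Correctness argument.} The same example shows that items 1 and 2 of Proposition~\ref{pr:2ata} cannot carry the proof. They only assert equi-nonemptiness. Complementation needs a per-tree invariant: on a well-formed encoding, $\trpt(\varphi)$ accepts at $x$ iff $\varphi$ is true at $x$ in the well-founded model. That invariant holds only for an \emph{exact} guess.

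\emph{Fix.} Verify the $\C'$-part of the guess in both directions. At child $i$, launch $\trpt(\varphi)$ for each $\varphi\in\gamma_i$, and launch the dual state $\overline{\trpt(\varphi)}$ for every other $\varphi$. Use the dual here, not $\trmt(\cdot)$, because in the three-valued model \emph{not validated} differs from \emph{negated}. This keeps the state count linear. Since $\mathit{WF}_{\G,\C'}$ is unique, every graph admits an exact guess, so completeness is preserved.

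\textbf{Minor point on hardness.} Proposition~\ref{pr:docsat-docimp} cannot be used verbatim, since every graph validates the target-free document $(\emptyset,\emptyset)$. Reduce satisfiability instead to non-implication of a fixed unsatisfiable document. One example is a document with constraint $s\gets\lnot s$ and target $s(a)$, whose target is undefined in every well-founded model.
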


\paragraph{Least Fixed Point Semantics.} Apart from the supported, stable model and well-founded semantics for recursive SHACL, another semantics proposed in the literature is the least fixed point semantics for stratified SHACL \cite{ecai2023,aij2026}. As the name already suggests, this is a semantics only using a least fixed point in its definition, making it conceptually easier. Nonetheless, this suffices to compute either the well-founded or stable model semantics over stratified constraints; if $\C$ is stratified, these three semantics coincide. In particular, this means the discussed static analysis problems for least fixed points semantics can be simply copied from the well-founded setting discussed above. Moreover, the translation of stratified SHACL, and thus also of SHACL under the least fixed point semantics, would fall in the alternation free $\mu$-calculus. To see this, in a nutshell, note that in a $\mu$-calculus formula with alternation like $\mu X_s.\varphi$ such that $\nu X_{\Bar{s’}}.\varphi(X_s) \in \sub(\varphi)$ we find $s’$ depending negatively on $s$; because of the $\nu$-formula appearing inside the $\mu$-formula, and $s$ depending negatively on $s’$; because of $X_s$ appearing inside the $\nu$-subformula. Clearly, this makes the constraint set unstratified. We are optimistic that this can be exploited to achieve more scalable static analysis results.
\section{Conclusion and Discussion}

In this paper, by establishing a connection to the full hybrid $\mu$-calculus, we provide the first decidability and complexity results for basic static analysis problems in recursive SHACL under the well-founded semantics. These results contrast sharply with the undecidability results we show for the supported and stable model semantics. A key observation is that the implication problem becomes computationally unmanageable when the underlying semantics can assign multiple shape assignments to a graph. The well-founded semantics yields a unique shape assignment for every graph, which forms one of the bases under our decidability result.

There are several directions for future work. First, in the context of practical applications, it is relevant to understand  static analysis in the presence of a richer syntax for shape expressions. For instance, adding numeric restrictions (such as qualified number restrictions in DLs) to SHACL under the well-founded semantics is a natural next step. This requires understanding how a well-founded semantics can be defined in this setting and determining how our algorithmic methods can be adapted to it. Second, our translation from SHACL under the well-founded semantics into the full hybrid $\mu$-calculus raises the question of whether a similar translation exists for Datalog with negation under the well-founded semantics into Fixpoint Logic. A natural follow-up question is whether decidability results for Guarded Fixpoint Logic can yield new positive results for restricted Datalog fragments under the well-founded semantics.

\section*{Acknowledgments}
\imge \quad  The project leading to this application has received funding from the European Union's Horizon 2020 research and innovation programme under grant agreement No 101034440.  

In addition, this work was partially funded by the Austrian Science Fund (FWF) projects P30873, PIN8884924, and 10.55776/COE12.

\bibliographystyle{kr}
\bibliography{bibliography}

\end{document}